\newcommand*{\FINAL}{} 
\tikzset{black node/.style={draw, circle, fill = black, minimum size = 5pt, inner sep = 0pt}}
\tikzset{normal/.style = {draw=none, fill = none, minimum size =0, rectangle}}
\newcommand{\spmv}{SpMV\xspace}
\DeclareMathOperator*{\argmin}{arg\,min}
\newaliascnt{prop}{theorem}
\newtheorem{prop}[prop]{Proposition}
\newaliascnt{lemma}{theorem}
\newtheorem{lemma}[lemma]{Lemma}
\newaliascnt{observation}{theorem}
\newaliascnt{model}{theorem}
\newtheorem{model}[model]{Model}
\newaliascnt{corollary}{theorem}
\newtheorem{corollary}[corollary]{Corollary}
\newaliascnt{conjecture}{theorem}
\newaliascnt{claim}{theorem}
\newaliascnt{problem}{theorem}
\newtheorem{problem}[problem]{Problem}
\newaliascnt{question}{theorem}
\newtheorem{question}[question]{Question}
\theoremstyle{definition}
\newcommand{\defn}[1]{\textcolor{blue}{\emph{#1}}\xspace}
\newcommand{\secref}[1]{Section~\ref{#1}}
    \definecolor{darkgreen}{RGB}{0,100,0}
\begin{document}

\title{BOBA\@: A Parallel Lightweight Graph Reordering Algorithm with Heavyweight Implications}


\author{Matthew Drescher}
\authornote{This material is based upon work supported by the National Science Foundation under Grant \#2127309 to the Computing Research Association for the CIFellows 2021 Project. Any opinions,findings, and conclusions or recommendations expressed in this material are those of the authors and do not necessarily reflect the views of the National Science Foundation or the Computing Research Association.}
\affiliation{
  \institution{University of California, Davis}            
  \city{Davis}
  \state{California}
  \country{U.S.A}                    
}
\email{mdrescher@ucdavis.edu}          

\author{Muhammad A. Awad}
\orcid{0000-0002-6914-493X}             
\affiliation{
  \institution{University of California, Davis}            
  \city{Davis}
  \state{California}
  \country{U.S.A}                    
}
\email{mawad@ucdavis.edu}          

\author{Serban D. Porumbescu}
\orcid{0000-0003-1523-9199}             
\affiliation{
  \institution{University of California, Davis}            
  \city{Davis}
  \state{California}
  \country{U.S.A}                    
}
\email{sdporumbescu@ucdavis.edu}          

\author{John D. Owens}
\orcid{0000-0001-6582-8237}            
\affiliation{
  \institution{University of California, Davis}            
  \city{Davis}
  \state{California}
  \country{U.S.A}                    
}
\email{jowens@ece.ucdavis.edu} 

\begin{abstract}
  We describe a simple parallel-friendly lightweight graph reordering algorithm for COO graphs (edge lists). Our
  ``Batched Order By Attachment'' (BOBA) algorithm is linear in the number of edges in terms of reads and linear in the number of vertices for writes through to main memory. It is highly parallelizable on GPUs\@. We show that, compared to a randomized baseline, the ordering produced gives improved locality of reference in sparse matrix-vector multiplication (SpMV) as well as other graph algorithms. Moreover, it can substantially speed up the conversion from a COO representation to the compressed format CSR, a very common workflow. Thus, it can give \emph{end-to-end} speedups even in SpMV\@. Unlike other lightweight approaches, this reordering does not rely on explicitly knowing the degrees of the vertices, and indeed its runtime is comparable to that of computing degrees. Instead, it uses the structure and edge distribution inherent in the input edge list, making it a candidate for default use in a pragmatic graph creation pipeline. This algorithm is suitable for road-type networks as well as scale-free. It improves cache locality on both CPUs and GPUs, achieving hit rates similar to the heavyweight techniques (e.g., for SpMV, 7--52\% and 11--67\% in the L1 and L2 caches, respectively). Compared to randomly labeled graphs, BOBA-reordered graphs achieve end-to-end speedups of up to 3.45. The reordering time is approximately one order of magnitude faster than existing lightweight techniques and up to 2.5 orders of magnitude faster than heavyweight techniques.
\end{abstract}

\begin{CCSXML}
  <ccs2012>
  <concept>
  <concept_id>10011007.10011006.10011008</concept_id>
  <concept_desc>Software and its engineering~General programming languages</concept_desc>
  <concept_significance>500</concept_significance>
  </concept>
  <concept>
  <concept_id>10003456.10003457.10003521.10003525</concept_id>
  <concept_desc>Social and professional topics~History of programming languages</concept_desc>
  <concept_significance>300</concept_significance>
  </concept>
  </ccs2012>
\end{CCSXML}

\ccsdesc[500]{Software and its engineering~General programming languages}
\ccsdesc[300]{Social and professional topics~History of programming languages}

\keywords{graph, reordering, GPU, preferential attachment}  

\maketitle

\section{Introduction}
\label{sec:intro}

Graph data structures typically encode each of the $n$ vertices of a graph with a unique ID from $1$ to $n$. Edges, either explicitly or implicitly, are then encoded as pairs of vertices, and the entire graph data structure is then stored into a block of memory in vertex-ID order. Because the actual value of a vertex's ID is usually unimportant, we have a great deal of freedom to optimize the ordering of vertices in a graph data structure in the service of a particular goal.

Now, graph computations often spend much if not most of their time traversing edges in the graph from source to destination. Because graphs typically exhibit complex connectivity, and the size of an interesting graph is usually much larger than the size of any last-level cache, a random ordering of vertices is unlikely to achieve significant cache locality. As a result, many graph computations are dominated by random access into main memory.

Could we reorder the vertices in the graph to recover locality? Previous work in this area has shown that reordering can successfully increase performance by exposing locality in graph computation. Such reordering efforts---for instance, reducing the bandwidth of the graph, which places connected vertices near each other, or partitioning the graph to expose locality within partitions---have achieved significant speedups but are expensive, with the reordering process taking much more time than the subsequent graph computation. Consequently, these heavyweight methods are primarily useful in \emph{offline} scenarios where a graph is pre-processed once but used many times, so that the cost of reordering can be amortized across many uses.

Our work addresses a different use scenario. We focus on a \emph{lightweight} reordering that is inexpensive to compute and thus can be useful even in \emph{online} scenarios where we have no opportunity to preprocess the graph.  Such scenarios are common in modern data-science workflows like NVIDIA's RAPIDS, where graph computation may be an intermediate stage of a complex pipeline that produces graph data dynamically and where preprocessing is not an option. The ideal reordering process achieves the performance of heavyweight (offline) methods while remaining inexpensive enough to demonstrate performance benefits even for graphs where pre-processing is impossible. In other words, the ideal technique would achieve better performance for the combination of reordering and graph computation than the performance of graph computation alone on the un-reordered graph.
\FloatBarrier\subsection{Defining the Problems}
\label{sub:defining}
\begin{problem}[Offline Graph Reordering]\label{prob:offline}
Given a graph $G$ and a graph application $f(G)$, find an ordering of $G$'s vertices, in time polynomial in $V(G)$, to maximize cache locality, with the expectation that better cache locality maximizes performance of $f(G)$.
\end{problem}
We refer to methods that target Problem~\ref{prob:offline} as \defn{heavyweight} algorithms.
From a more pragmatic point of view, the reason for reordering a graph is to accelerate some graph application. Keeping this context in mind motivates:

\begin{problem}[Online Graph Reordering]\label{prob:online}
Given a graph $G$ and a graph application $f(G)$, find an ordering of $G$'s vertices, that can, even including the cost of reordering, improve cache-locality such that there is a net speedup in $f(G)$ .
\end{problem}

Similarly we refer to methods that target Problem~\ref{prob:online} as \defn{lightweight} algorithms.
The most common starting point for \emph{building} a matrix is a COO representation~\cite{Filippone:2017:SMM}. This representation follows naturally from most file formats, where an edge-list representation is common if not dominant.\footnote{For instance, SuiteSparse (\url{https://sparse.tamu.edu/}), networkrepository (\url{https://networkrepository.com/}), and Stanford SNAP (\url{https://snap.stanford.edu/}) primarily use \texttt{el} and/or \texttt{mtx} edge-list formats.} In the following discussion, we use \spmv (single-hop graph traversal from all graph vertices) to represent any graph computation; \spmv is both an important kernel as well as a simple one, so if we can satisfy Problem~\ref{prob:online} with \spmv, we can reasonably expect similar success with other graph kernels.

While some implementations of \spmv run directly on a COO representation, more common is first \emph{converting} an edge-list representation to a CSR, the most popular format for computation~\cite{Filippone:2017:SMM}. This resulting CSR representation is typically presented as the input for graph reordering and \spmv. This is often a convenient assumption, since in the conversion to CSR, vertex degree has essentially been pre-computed.

Indeed, popular real world frameworks for data-science such as SciPy\footnote{\url{https://docs.scipy.org/doc/scipy/reference/generated/scipy.sparse.coo_matrix.html} SciPy's supported function for reading a Matrix Market file, \texttt{mmread} returns \emph{only} COO format},  NetworkX\footnote{\url{https://networkx.org/documentation/stable/reference/readwrite/matrix_market.html}, COO is also \emph{the} supported path for reading Matrix Market files}, RAPIDS\footnote{When RAPIDS reads a Matrix Market graph file, it first creates an edge list (COO). \url{https://github.com/rapidsai/cugraph/blob/7d8f0fd63ad58ce6deada5508bfc08ee9aa46d36/cpp/tests/utilities/matrix_market_file_utilities.cu}}, as well as GPU graph frameworks such as Gunrock~\cite{Wang:2017:GGG}\footnote{\url{https://github.com/gunrock/gunrock/blob/a7fc6948f397912ca0c8f1a8ccf27d1e9677f98f/gunrock/graphio/market.cuh}}, follow this process, with the additional complication that vertices are often not numerically labeled. In such workflows, relabeling vertices to numeric IDs is already necessary, and since BOBA does not require its input edge list to have numeric IDs, but returns a cache-friendly numeric ordering, BOBA is a natural fit\footnote{\url{https://github.com/rapidsai/cugraph/blob/492245009cd2075054573b450a602422ae8f4a78/python/cugraph/cugraph/tests/test_renumber.py}}.
Therefore we motivate the primary problem:

\begin{problem}[Pragmatic Graph Reordering]
Given a COO representation of a graph with $n$ randomly labeled vertices from the set $\{1,2,\ldots,n\}$, is there a reordering algorithm, that can, even including the cost of reordering, give a net speedup in CSR graph creation and \spmv on the resultant CSR?
\label{problem:pragmatic-graph-reordering}
\end{problem}
We find that BOBA answers the question in the affirmative, and moreover, perhaps surprisingly, that its results are competitive with existing heavyweight reorderings.  Therefore we motivate the following questions with respect to the \emph{pragmatic graph reordering problem}.

It is important to clarify that as an abstract graph algorithm, BOBA can easily be implemented on any graph representation from which we can extract an edge list. We chose COO for this paper, as we think it is at this stage of the graph construction pipeline that BOBA is uniquely effective.

Reordering prior to COO$\rightarrow$CSR conversion speeds up that conversion considerably. In Section~\ref{sec:results}, we show a significant speedup in conversion time. The intuition behind this speedup is that BOBA tends to improve spatial locality of the neighborhoods of vertices. Thus when traversing the COO's edge list to create a CSR, we incur fewer cache misses. BOBA is profitable for the gains given in this conversion alone.

\begin{question}[Offline]
    How does BOBA compare to other reordering methods as an offline reordering method?
\end{question}

\begin{question}[Online]
    How does BOBA compare to other reordering methods as an online reordering method?
\end{question}

\begin{figure}[t]\centering
    \captionsetup{font=footnotesize,labelfont=footnotesize}
    \begin{minipage}[t]{.32\textwidth}
        \begin{tikzpicture}[scale=.3,inner sep=1.5pt]
            \tikzstyle{vtx}=[circle,draw,thick,fill=white]
            \node[vtx, fill = orange] (a) at (4,0) {\tiny{$a$}};
            \node[vtx] (v1) at (4,4) {\tiny{1}};
            \node[vtx] (v5) at (4,-4) {\tiny{5}};
            \node[vtx] (v3) at (0,0) {\tiny{3}};
            \node[vtx] (v2) at (1,2.5) {\tiny{2}};
            \node[vtx] (v4) at (1,-2.5) {\tiny{4}};

            \node[vtx, fill = orange] (b) at (8,0) {\tiny{$b$}};
            \node[vtx] (v6) at (8,4) {\tiny{6}};
            \node[vtx] (v10) at (8,-4) {\tiny{10}};
            \node[vtx] (v8) at (12,0) {\tiny{8}};
            \node[vtx] (v7) at (11,2.5) {\tiny{7}};
            \node[vtx] (v9) at (11,-2.5) {\tiny{9}};

            \draw[thick,->] (v3)--(a); 
            \draw[thick,->] (v1)--(a);
            \draw[thick,->] (v2)--(a);
            \draw[thick,->] (v5)--(a);
            \draw[thick,->] (v4)--(a);
            \draw[thick,->] (b)--(a);
            \draw[thick,->] (v6)--(b); 
            \draw[thick,->] (v7)--(b);
            \draw[thick,->] (v8)--(b);
            \draw[thick,->] (v9)--(b);
            \draw[thick,->] (v10)--(b);
            \draw[thick,->] (a)--(b);

        \end{tikzpicture}
    \end{minipage}

    \begin{minipage}[b]{.5\textwidth}
        \begin{tikzpicture}[scale=.6,inner sep=1pt, every node/.style={anchor=west,font=\small}]
            \matrix [nodes={draw,fill=blue!3,minimum size=2.5mm}, anchor=south] {
                \node{\tiny 1};   & \node{\tiny 2};   & \node{\tiny 3};   & \node{\tiny 4};
                                  & \node{\tiny $b$}; & \node{\tiny 5};
                                  & \node{\tiny 6};   & \node{\tiny 7};   & \node{\tiny 8};   & \node{\tiny 9};
                                  & \node{\tiny 10};  & \node{\tiny $a$};                                                                                 \\
                \node{\tiny $a$}; & \node{\tiny $a$}; & \node{\tiny $a$}; & \node{\tiny $a$}; & \node{\tiny $a$}; & \node{\tiny $a$};
                                  & \node{\tiny $b$}; & \node{\tiny $b$}; & \node{\tiny $b$}; & \node{\tiny $b$}; & \node{\tiny $b$}; & \node{\tiny $b$}; \\
            };
            \draw (3,.8) node{I};
            \draw (3,.2) node{J};
        \end{tikzpicture}

        \begin{tikzpicture}[scale=.6,inner sep=1pt,every node/.style={anchor=west,font=\small}]
            \matrix [nodes={draw,fill=green!3,minimum size=2.5mm}] {
                \node{\tiny 1}; & \node{\tiny $a$}; & \node{\tiny 2};   & \node{\tiny $a$};
                                & \node{\tiny 3};   & \node{\tiny $a$}; & \node{\tiny 4};   & \node{\tiny $a$};
                                & \node{\tiny $b$}; & \node{\tiny $a$}; & \node{\tiny 5};   & \node{\tiny $a$};
                                & \node{\tiny 6};   & \node{\tiny $b$}; & \node{\tiny 7};   & \node{\tiny $b$};
                                & \node{\tiny 8};   & \node{\tiny $b$}; & \node{\tiny 9};   & \node{\tiny $b$};
                                & \node{\tiny 10};  & \node{\tiny $b$}; & \node{\tiny $a$}; & \node{\tiny $b$}; \\
            };
            \draw (8,.5) node{\tiny{Flattened Edge List}};
        \end{tikzpicture}

        \begin{tikzpicture}[scale=.6,inner sep=1pt,every node/.style={anchor=west,font=\small}]
            \matrix [nodes={draw,fill=yellow!3,minimum size=2.5mm}] {
                \node{\tiny 1}; & \node{\tiny $a$}; & \node{\tiny 2};
                                & \node{\tiny 3};   & \node{\tiny 4}; & \node{\tiny $b$}; & \node{\tiny 5}; & \node{\tiny 6}; & \node{\tiny 7};
                                & \node{\tiny 8};   & \node{\tiny 9}; & \node{\tiny 10};                                                        \\
            };
            \draw (6,0) node{\tiny{Uniquified}};
        \end{tikzpicture}
    \end{minipage}

    \caption{\textbf{Top} A star-like graph with two centers $a$, and $b$ that are adjacent. A uniformly random permutation of $\{1,2,\ldots,10,a,b\}$ would be as likely to map $a,b$ far apart as close together. Instead, BOBA uniformly selects a cell of the flattened edge list, adds its vertex to the permutation, and then removes all remaining cells containing it, bringing $a,b$ closer together. Such a selection is similar to the attachment model of Albert and Barab{\'a}si~\cite{albert2002statistical}.
        Let $p_k$ be the probability that $a,b$ will both appear in the first $k$ positions. $p_2 = \frac{14}{24}\cdot\frac{7}{17} \approx 24\% $, $p_3 \approx 50\%, p_4 \approx 70\%$. Thus both will most likely occur within the first $\sim$5 positions.
    }
    \label{fig:star}

\end{figure}
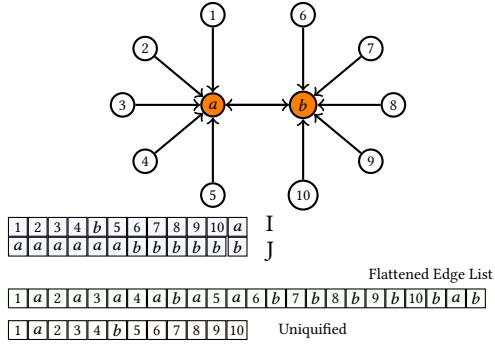

\subsection{Our Contribution}
This paper introduces \textsc{Batched Order By Attachment} (BOBA) a fast method for reordering graph data to take better advantage of hardware locality. It is inspired by \defn{preferential attachment}(PA), a network generation process defined by Albert and Barab{\'a}si~\cite{albert2002statistical}, which is used to mathematically model the structure of real-world scale-free networks~\cite{newman2018networks}. PA is a process that iteratively generates a scale-free graph. In PA, the $ith$ vertex $v_i$ \emph{attaches} to existing vertices $j \in \{1,2,\ldots,i-1\}$, with probability proportional to the current degree of $v_j$. In other words, at time $i$, $v_i$ will \emph{prefer} to connect with vertices that \emph{appear the most often in the edge list}. This can be accomplished by randomly selecting vertices from a flattened version of the edge list. Note, this does not require the calculation of degree information. The idea of the BOBA algorithm is similar, though we are given a graph $G$, not generating it: \emph{what if we run the PA process starting with $G$ as an initial state, and then let the new vertices decide the order of $G$ by the time they are selected for attachment?} This can be crudely approximated simply, albeit with some ambiguity: \emph{order vertices by their appearance in the edge list}. We note \emph{any} graph representation allows for a fairly straightforward implementation of this idea. The COO format makes it especially easy: Create a permutation of the vertices by concatenating the sequence of \emph{sources} with the sequence of \emph{destinations} and then \emph{uniquify the resulting sequence in a stable order}. The resulting permutation decreases the execution time of graph applications because of the following.
\subsubsection{Temporal Locality}
According to Esfahani et al.~\cite{esfahani:2021:locality}, Gorder~\cite{Wei:2016:SGP} primarily targets \emph{temporal locality}. That is, these orderings try to increase the chance that when a vertex's neighbors are brought into cache, the next vertex processed will read some of these same cached neighbors. This is also the principle behind degree-based ordering approaches. In scale-free graphs, so called `hub' (high-degree) vertices can be moved together and placed into the same cache line; the vertices in this cache line are internally densely connected and likely adjacent to $\sim$most vertices in the graph. However, degree-based sorting approaches are inappropriate for other types of degree distributions, where, in fact, they make things worse (e.g., Figure~\ref{fig:road}).

Figure~\ref{fig:star} is a simple example showing that in scale-free distributions, BOBA tends to bring higher-degree vertices closer together, thus obtaining some of the same type of temporal locality enjoyed by degree orderings. Moreover, BOBA tends to also produce better results in more uniform types of distributions. We show in Proposition~\ref{prop:regular} that in a uniform setting, BOBA gives a constant-factor approximation guarantee for the neighborhood problem that Gorder explicitly targets. BOBA, on the other hand, more deliberately targets spatial locality.

\subsubsection{Spatial Locality}
Since BOBA operates directly on the edge listing, it tends to cluster source vertices that share a common destination vertex in the output ordering. This aids spatial locality, and in directed graphs, helps pull-based algorithms. For example, below we list a pull-based \spmv, $y = Ax$, where $x$ is an input array of size $|V(G)|$, and $N^\text{in}(v)$ denotes the \defn{in-neighbors} of $v$.

\begin{algorithm}\caption{\sc{\spmv: Pull}}\label{alg:graph}
    \begin{scriptsize}
        \begin{algorithmic}[1] \label{code:neighborhood}
            \REQUIRE a graph $G$ with $n$ vertices, $x \in \mathbb{R}^n$, $A_G :=$ adjacency matrix of $G$.
            \ENSURE $y = A_G \cdot x$.
            \STATE initialize $y \gets \mathbf{0}^n$.
            \FORALL{$v \in V(G)$ in parallel}
            \FOR{$ u \in N^\text{in}(v)$}
            \STATE{$y(v) \gets y(v) + A_G(v,u) \cdot x(u)$} \label{line:rand}
            \ENDFOR
            \ENDFOR
        \end{algorithmic}
    \end{scriptsize}
\end{algorithm}

Here, in this pull-based pattern, the inner loop's performance is affected by the cache performance of accesses into $N^\text{in}(v)$. A CSC (resp.\ CSR) representation offers cache friendly access to $N^\text{in}(v)$ (resp.\ $N^\text{out}(v)$). However, random accesses into dense vector $x$ (Line~\ref{line:rand}) result in poor cache performance. Reordering attempts to minimize this by moving the non-zero columns of A (vertices of G) closer together. To give some measure of the quality of this \emph{spatial locality}, we develop the metric \emph{per neighborhood cache-line}. This scores the number of cache-lines spanned by the IDs of each neighborhood, where lower scores imply better spatial locality. BOBA scores well on this metric; in Section~\ref{sec:results} we show that BOBA's ordering is significantly better than random, and that the metric correlates with performance.

\FloatBarrier\subsubsection{BOBA does not harm natural structure}
For graphs with $\sim$uniform degree or where degree is anti-correlated with connectivity (Figure~\ref{fig:road}), degree-based sorting algorithms are no better than random. Other heavyweight methods destroy the inherent community structure that is sometimes present in the original ordering, or are simply ineffective.\footnote{Gorder, for instance, despite high run-time cost, does not give significant speedups compared to degree based sort methods on graphs, e.g., \texttt{kron\_g500-logn20}, that have very low average clustering coefficients~\cite{faldu2019closer}.} In contrast, BOBA actually seems to restore this original structure in graphs whose original generation process is similar to Preferential Attachment (e.g., Figure~\ref{fig:pa}). We point out that as a simple corollary of Bollobas's Theorem 16~\cite{Bollobas:2002:MRO}\label{lem:bollobas}, order by attachment time, in graphs that follow this model, scores well by the shared-neighbors metric.

\begin{figure}
    \centering
    \captionsetup{font=footnotesize,labelfont=footnotesize}
    \begin{subfigure}[t]{\columnwidth}
        \includegraphics[width=\columnwidth]{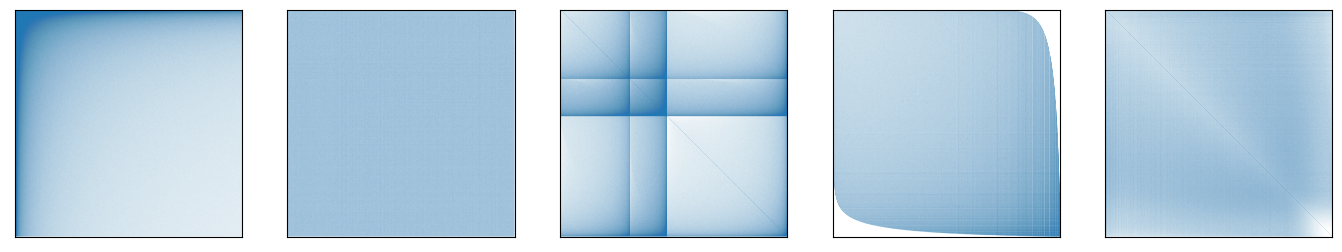}
        \caption{Barab\'{a}si-Albert graph dataset generated with 5,000,000 nodes and 10 edges (attachments) per node insertion. }
        \label{subfig:barabasi}
    \end{subfigure}

    \begin{subfigure}[t]{\columnwidth}
        \includegraphics[width=\columnwidth]{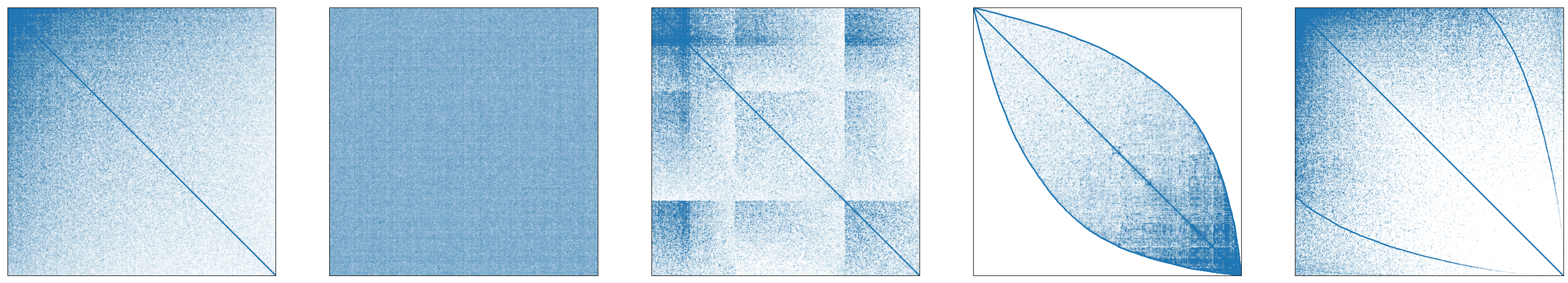}
        \caption{The coAuthors dataset. }
        \label{subfig:coauthors}
    \end{subfigure}
    \begin{subfigure}[t]{\columnwidth}
        \includegraphics[width=\columnwidth]{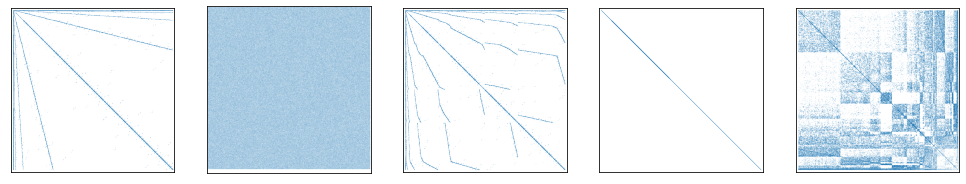}
        \caption{The delaunay\_n22 dataset.}
        \label{subfig:delaunay}
    \end{subfigure}

    \caption{Visualizations of (\ref{subfig:barabasi}) a simulated powerlaw graph, (\ref{subfig:coauthors}) a real-world powerlaw graph, and (\ref{subfig:delaunay}) a regular uniform graph under five different reorderings, from left to right: original dataset, randomized ordering, BOBA (middle), RCM, Gorder.
        Column 2 (from the left) displays the randomized ordering used as input for BOBA, RCM, and Gorder. We see that the BOBA ordering captures more of the spatial structures seen in the original, unordered dataset. This is most easily seen in (\ref{subfig:delaunay}).
    }
    \label{fig:pa}
\end{figure}

The case where BOBA alone can not help is a uniform degree distribution where the input edge list appears in random order. We discuss options for this case in Section~\ref{subsec:randomEdges}.  Other lightweight approaches that rely on scale-free degree distributions also cannot help in this case, and can actually make things worse. Indeed, when vertex degree is roughly uniform, sorting by degree is roughly a random ordering (Figure~\ref{fig:road}). Even when BOBA cannot improve cache utilization, its reordering cost is minimal, BOBA is safe from a reorder time investment as well.

\subsubsection{BOBA is fast}
When heavyweight methods don't succeed in delivering dramatic speedups, it can be a considerable loss of investment, whereas for BOBA, it is quite minimal. For example, BOBA took only 17~ms to reorder the $\sim$90 million-edge \texttt{kron\_g500-logn20}, while Gorder took 42~s.

BOBA is faster than existing lightweight methods, especially when degree information is unknown (e.g., when the input is COO)\@. This makes it uniquely suited for COO graphs. BOBA can give significant gains to both CSR conversion \emph{and} applications like \spmv. These gains easily make up for the small reordering cost. For example, on a modern GPU, BOBA can reorder a graph with roughly 60 million edges in roughly 16 milliseconds. This might reduce the runtime of \spmv from 9 to 5 milliseconds, for a net loss of 11 milliseconds, but when we factor in that COO~$\rightarrow$ CSR conversion also decreased, from 8000 milliseconds to 5000, we see that the gains are substantial. Though some other reordering techniques (e.g., RCM) are challenging to parallelize~\cite{Karantasis:2014:PRA}, BOBA is trivially parallelizable, and in fact the benchmarks we present here have been run via parallel implementation on a GPU, whose programming model is a good match for our algorithm. We suggest that BOBA should be applied indiscriminately to unordered, or randomly labeled, graph data.

\begin{figure}[t]\centering
    \captionsetup{font=footnotesize,labelfont=footnotesize}
    \begin{minipage}[t]{.2\textwidth}
        \begin{tikzpicture}[scale=.3,inner sep=1pt]
            \tikzstyle{vtx}=[circle,draw,thick,fill=white]
            \draw[thick] (-2.5,-2)--(0,0)--(2,0)--(4,0)--(4,2)--(4,4)--(2,4)--(0,4);
            \draw[thick] (-1,-2)--(0,0);
            \draw[thick] (.5,-2.5)--(0,0);
            \draw[thick] (0,4)--(0,2)--(0,0);
            \draw[thick] (6,6)--(4,4)--(6,4);
            \draw (0,0) node[vtx]{\tiny $1$}
            (-3,0) node{\tiny{Toronto}}
            (-2.5,-3) node{\tiny{Midland}}
            (-2.5,-2) node[vtx]{\tiny{$12$}}

            (1.5,-3.5) node{\tiny{Guelph}}
            (.5,-2.5) node[vtx]{\tiny{$9$}}
            (-1,-4) node{\tiny{Rapids}}
            (-1,-2) node[vtx]{\tiny $11$}
            (2,0) node[vtx]{\tiny $4$}
            (2,-1) node{\tiny{Chicago}}
            (0,2) node[vtx]{\tiny $6$}

            (1.5,6) node{\tiny{LA}}
            (2,4) node[vtx]{\tiny $7$}

            (-3,2) node{\tiny{D.C}}
            (4,2) node[vtx]{\tiny $5$}
            (6,0) node{\tiny{Boulder}}
            (6.5,2) node{\tiny{Vancouver}}
            (4,0) node[vtx]{\tiny $8$}

            (4,6) node{\tiny{Seattle}}
            (4,4) node[vtx]{\tiny $2$}

            (-3,4) node{\tiny{Puebla}}
            (0,4) node[vtx]{\tiny $3$}

            (6,4) node[vtx]{\tiny{$10$}}
            (7.5,3) node{\tiny{Nanaimo}}
            (6.5,7) node{\tiny{Eureka}}
            (6,6) node[vtx]{\tiny{$13$}};

        \end{tikzpicture}
    \end{minipage}
    \qquad
    \begin{minipage}[b]{.2\textwidth}
        \begin{tikzpicture}[scale=.3,inner sep=1pt]
            \tikzstyle{vtx}=[circle,draw,thick,fill=white]
            \draw[thick] (-2,-2)--(0,0)--(2,0)--(4,0)--(4,2)--(4,4)--(2,4)--(0,4);
            \draw[thick] (-1,-2)--(0,0);
            \draw[thick] (0,-2)--(0,0);
            \draw[thick] (0,4)--(0,2)--(0,0);
            \draw[thick] (6,6)--(4,4)--(6,4);
            \draw (0,0) node[vtx]{\tiny $1$}
            (-3,0) node{\tiny{Toronto}}
            (-3,-3) node{\tiny{Midland}}
            (-2,-2) node[vtx]{\tiny{$2$}}

            (1,-3.5) node{\tiny{Guelph}}
            (0,-2.5) node[vtx]{\tiny{$4$}}
            (-1,-4) node{\tiny{Rapids}}
            (-1,-2) node[vtx]{\tiny $3$}
            (2,0) node[vtx]{\tiny $5$}
            (2,-1) node{\tiny{Chicago}}
            (0,2) node[vtx]{\tiny $13$}

            (1.5,6) node{\tiny{LA}}
            (2,4) node[vtx]{\tiny $11$}

            (-3,2) node{\tiny{D.C}}
            (4,2) node[vtx]{\tiny $7$}
            (6,0) node{\tiny{Boulder}}
            (6.5,2) node{\tiny{Vancouver}}
            (4,0) node[vtx]{\tiny $6$}

            (4,6) node{\tiny{Seattle}}
            (4,4) node[vtx]{\tiny $8$}

            (-3,4) node{\tiny{Puebla}}
            (0,4) node[vtx]{\tiny $12$}

            (6,4) node[vtx]{\tiny{$9$}}
            (7.5,3) node{\tiny{Nanaimo}}
            (6.5,7) node{\tiny{Eureka}}
            (6,6) node[vtx]{\tiny{$10$}};

        \end{tikzpicture}
    \end{minipage}
    \begin{minipage}[b]{.6\textwidth}

        \begin{tikzpicture}[scale=.6,inner sep=1pt,every node/.style={anchor=west,font=\small}]
            \matrix [nodes={draw,fill=blue!3,minimum size=2.5mm}] {
                \node{\tiny T}; & \node{\tiny{M}}; & \node{\tiny T}; & \node{\tiny R}; & \node{\tiny T}; & \node{\tiny G}; & \node{\tiny T}; & \node{\tiny C};                          & \node{\tiny C}; & \node{\tiny B}; & \node{\tiny V}; & \node{\tiny B}; & \node{\tiny S}; & \node{\tiny N}; & \node{\tiny S}; & \node{\tiny E};
                                & \node{\tiny S};                 & \node{\tiny L}; & \node{\tiny L}; & \node{\tiny P}; & \node{\tiny D}; & \node{\tiny P}; & \node{\tiny D};                          & \node{\tiny T};                                                                                                                                 \\
                \node{\tiny M}; & \node{\tiny T};                 & \node{\tiny R}; & \node{\tiny T}; & \node{\tiny G}; & \node{\tiny T}; & \node{\tiny C}; & \node{\tiny T};                          & \node{\tiny B}; & \node{\tiny C}; & \node{\tiny B}; & \node{\tiny V}; & \node{\tiny N}; & \node{\tiny S}; & \node{\tiny E}; & \node{\tiny S}; &
                \node{\tiny L}; & \node{\tiny S};                 & \node{\tiny P}; & \node{\tiny L}; & \node{\tiny P}; & \node{\tiny D}; & \node{\tiny T}; & \node{\tiny D}; \\};

            \draw (11,.75) node{\tiny{Edge list}};
            \draw (11,.25) node{\tiny{I}};
            \draw (11,-.25) node{\tiny{J}};
        \end{tikzpicture}

        \begin{tikzpicture}[scale=.6,inner sep=1pt,every node/.style={anchor=west,font=\small}]
            \matrix [nodes={draw,fill=yellow!3,minimum size=2.5mm}] {
                \node{\tiny T}; & \node{\tiny M}; & \node{\tiny R}; & \node{\tiny G}; & \node{\tiny C}; & \node{\tiny B}; &
                \node{\tiny V}; & \node{\tiny S}; & \node{\tiny N}; & \node{\tiny E}; & \node{\tiny L}; & \node{\tiny P}; & \node{\tiny D}; \\
            };
            \draw (11,.05) node{\tiny{Uniquified edge list}};\\
            \draw (11,-.45) node{\tiny{New ordering}};\\
            \draw (.15,-.45) node{\tiny{1}};
            \draw (.60,-.45) node{\tiny{2}};
            \draw (1.0,-.45) node{\tiny{3}};
            \draw (1.4,-.45) node{\tiny{4}};
            \draw (1.85,-.45) node{\tiny{5}};
            \draw (2.3,-.45) node{\tiny{6}};
            \draw (2.75,-.45) node{\tiny{7}};
            \draw (3.2,-.45) node{\tiny{8}};
            \draw (3.65,-.45) node{\tiny{9}};
            \draw (4.00,-.45) node{\tiny{10}};
            \draw (4.45,-.45) node{\tiny{11}};
            \draw (4.9,-.45) node{\tiny{12}};
            \draw (5.35,-.45) node{\tiny{13}};
        \end{tikzpicture}

    \end{minipage}
    \caption{
        An illustration of the BOBA algorithm vs.\ degree ordering over a small, almost uniform graph, that we pretend represents some roads in North America. The input COO graph is shown in blue with $I$ above $J$. Toronto and Seattle have degree 5 and 4 respectively, and the rest of the vertices have degree 2 or 1.
        \textbf{Left}: Order-by-degree can be arbitrarily bad in terms of spatial locality since vertices with the same degree will appear in arbitrary order. 
        \textbf{Right}: The BOBA order constructed from the uniquified edge list is shown at the bottom in yellow, and the new numeric labels below. Note pairs of vertices connected by an edge are reasonably close together in the resulting order. Better spatial locality in the cache, then, should improve traversal performance.
    }\label{fig:road}
\end{figure}
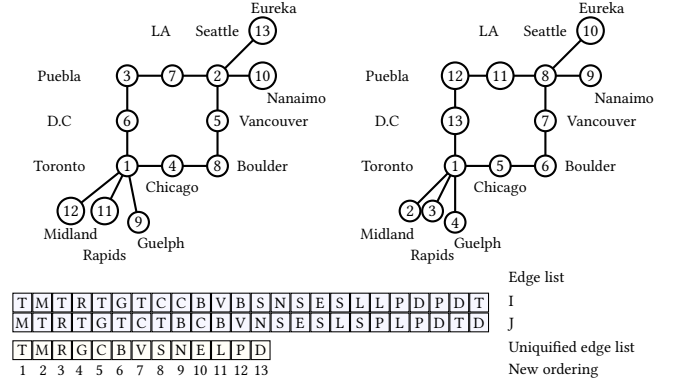

\section{Preliminaries}
\label{sec:prelim}
\subsection{Graphs}
In this paper, we are concerned with directed graphs $G$ with edges and vertices $E(G)$ and $V(G)$\@. The relevant \defn{neighborhood} will depend on whether the graph application is \emph{push} or \emph{pull} implementation. For our analysis, we will assume the push direction, i.e., $N(v) := \{u \in V(G) \mid vu \in E(G) \}$, and the \defn{degree} $\deg(v) = |N(v)|$. A \defn{coordinate representation (COO)}, which we will sometimes write as $\text{COO}(G) := (I,J)$, is a listing of the \emph{directed} edges of $G$ given by pairs of vectors $(I,J)$ where $I,J \in [n]^{m}$, with $m:= |E(G)|, n:=|V(G)|$, such that $E(G) = \{I(i)J(i) \mid i \in [m]\}$.
We will often identify $v_i \in \{v_1,v_2,\ldots,v_n\}$ with its index $i \in [n]$. All graphs considered in this paper are assumed to be \defn{sparse}, that is, there is some constant $C$ such the number of edges is $Cn \ll n^2$.

\section{Overview of Sparse Graph Reordering}
\label{sec:prev}
Generally sparse graphs are represented in memory as sparse matrices. The terms `graph' vs.\ `matrix' reordering typically refer to optimizing for particular  access patterns, rather than any ability to apply the reordering to the object. Matrix reorderings may apply to, say, numerical methods such as matrix decompositions and iterative solvers, and graph reorderings for more irregular graph-traversal-based algorithms. In practice the goals of both have significant overlap; for instance, PageRank is considered a graph algorithm, even though it is typically implemented with repeated applications of \spmv, a canonical matrix-based algorithm.

In 1999, \citet{anderson:1999:achieving} used matrix reordering to compute \spmv in a large cluster where the matrix is typically partitioned among nodes, and each part is locally reordered using a technique such as Reverse Cuthill-McKee (RCM)~\cite{Cuthill:1969:RTB} to reduce bandwidth and increase cache coherency.  While reordering algorithms are often first presented with sequential formulations, in 2014 \citet{Karantasis:2014:PRA} gave a parallelization of both RCM~\cite{Cuthill:1969:RTB} and Sloan~\cite{sloan:1986:algorithm}. \spmv was selected because, as mentioned, it is a key computation in important iterative methods, such as sparse solvers, and page rank. These parallel implementations of RCM and Sloan, though somewhat complex, produce reorderings of similar quality to the original sequential versions, and reduce the number of iterations of \spmv required to realize an end-to-end speedup. Still, because of the relatively high cost of the reordering, these techniques typically require on the order of hundreds of iterations to achieve speedup. Our goal for BOBA is to produce an ordering, using a parallel-friendly method, as part of the graph creation process that gives an \emph{end-to-end speedup in graph construction and \spmv after one iteration}.

\subsection{Objective-Based Approaches}
The first algorithms for matrix reordering modeled the problem of mapping vertices to a cache-friendly ordering as an optimization problem. This approach is NP-hard, so either heuristics or approximation algorithms are used to find feasible solutions. \defn{Heavyweight reordering} techniques target Problem~\ref{prob:offline}, with the quality of the result more important than the time to reorder. Such techniques assume the application speedup achieved from increased cache utilization will pay off in aggregate after many runs, the number of which is not a consideration in the design of the reordering algorithm.

\subsubsection{Bandwidth}\label{model:bandwidth}
The \defn{bandwidth} of a matrix is the maximum distance of a non-zero element to its diagonal. The \textsc{Bandwidth Problem} for a graph $G$ was shown to be NP-hard by Christos Papadimitriou~\cite{papadimitriou1976np}:
\[\argmin_{p} \max_{uv \in E(G)}|p(u) - p(v)|\]

\noindent
RCM~\cite{Cuthill:1969:RTB} is a commonly used heuristic algorithm for this objective with runtime $O(\deg_{\text{max}}|E|)$~\cite{liu:1976:reducing}.
Graph-contraction-based approaches, such as node dissection and Approximate Min Degree (AMD)~\cite{amestoy1996approximate}, have been employed for decades by the numerical community in factoring sparse matrices. These seem to often give good results for graph algorithms as well; however, these approaches require edge deletion and vertex contraction operations, which are a poor fit for typical static sparse graph formats like CSR\@.

\subsubsection{Gorder}
In 2016, Wei et al.\ present \defn{Gorder}~\cite{Wei:2016:SGP} as a method to approximate a vertex ordering $p_{\text{GO}}(V) := v_{1}v_{2}\ldots v_n$ that maximizes an NP-hard TSP formulation. The number of shared neighbors and edges between vertices that are placed within distance $w$ of each other are maximized. If $w$ represents the cache line size, they argue that this is a good proxy for cache alignment.
\begin{model}[GScore]\label{model:gscore}
  Given ordered graph $G$, define $s(u,v) := |(N(u) \cap N(v)| + |\{uv,vu\} \cap E(G)|$, then
  \textsc{$\text{GScore}(G,w) := \sum_{i=1}^{n}\sum_{j = \max(1,i-w)}s(v_i,v_j)$.}
\end{model}

They present a $\frac{1}{2w}$-factor approximation algorithm, which they show tends to do even better on real-world data sets. This algorithm runs in time $O(w \deg_{\text{max}} n^2)$, where $\deg_{\text{max}}$ is the maximum degree of $G$~\cite{Wei:2016:SGP}. In practice, on graphs with billions of edges, Gorder takes hours to run and can require thousands of iterations of the graph application in order to break even~\cite{Chen:2022:WBG}.

\subsection{Lightweight Approaches}
\label{sec:previous}
Lightweight approaches, which target Problem~\ref{prob:online}, are orders of magnitude faster than heavyweight approaches, but generally result in a more modest post-reorder speedup. Existing lightweight methods focus on \defn{power law} graphs. Empirical studies (e.g., \citet{eubank2004structural,newman2018networks}) have shown that the degree distribution in many real-world networks crudely follows a power law. Intuitively, this means the network is sparsely structured into a small number of \defn{hub} vertices that are adjacent to most of the other (small-degree) vertices. To the best of our knowledge, all existing lightweight methods attempt to leverage this skewness property. We say a graph is \defn{skew} or \defn{scale-free} if its degree distribution follows the common pattern with a relatively few very high-degree \defn{hub} vertices, and the rest have low degree. More formally, Newman~\cite{newman2018networks} suggests the pure power law as a simple way to model such distributions: the number of vertices of degree $k$ is roughly $\frac{n}{\zeta(\alpha)k^{\alpha}}$ for some $\alpha \in [2,4]$.

One natural idea in skew distributions is to \defn{sort by reverse degree}. This places all hub vertices at the beginning, with hopes that they form a densely connected subgraph. Clearly the first cache block of such an ordering would contribute well to both the NScore (\secref{model:nscore}) and the bandwidth (\secref{model:bandwidth}) objectives.  Note that if the degree distribution is not skew, but instead more uniformly distributed, sort by degree is essentially the same as taking a random permutation of vertices. See Figure~\ref{fig:road}.

Even with a skew distribution, if the high-degree subgraph is not densely connected or the original ordering already has most hubs located close together, then degree-based approaches can actually reduce cache effectiveness. Balaji and Lucia~\cite{balaji2018graph} give a metric to help predict if the input graph will benefit from degree-based lightweight reordering approaches. Rabbit Ordering by Arai et al.~\cite{arai2016rabbit} is another lightweight technique, that is based on community detection.
These degree based methods deliver end-to-end speedups in applications such as PageRank in scale-free networks. To reduce the runtime of a full sort, partial sorting methods have been devised such as frequency-based sorting (also known as hub-sort)~\cite{zhang2017making}, hub-cluster~\cite{balaji2018graph}, and degree binning~\cite{faldu2019closer}. The main idea is to separate or sort only \emph{hub} vertices. 

We have focused on workflows where the input graph has random, or even non-numeric, labels. We measured our success by comparing our graph-algorithm runtimes against the same graph algorithms run on randomly labeled inputs. Since the process of randomization eliminates any inherent structure of the original labeling, we did not compare against existing lightweight methods that rely on preserving such structure, such as degree-based grouping~\cite{faldu2019closer} or hub clustering~\cite{balaji2018graph}, but we have included hub sort~\cite{zhang2017making} and full sort by degree, which do not make these assumptions.

\subsection{Reordering in Parallel Graph Processing}
Chen and Chung~\cite{Chen:2022:WBG} have recently presented a cache-aware reordering targeting the multi-core parallel setting. They argue that load balancing becomes the dominant factor in determining graph processing time.
Indeed, \defn{vertex-centric} approaches, in which threads, or blocks of threads, are initially assigned to vertices is not an uncommon approach. Unfortunately, in skew networks, this natural approach suffers from \defn{load imbalance}, as the threads assigned to hub vertices will be tasked with a disproportionate amount of work. See, e.g., the book by Sanders et al.~\cite{sanders2019sequential}. In fact, degree-based reorderings can exacerbate this issue, as the first block of threads will have to process all hub vertices.

To overcome load imbalance, \defn{edge-centric} approaches evenly divide edges, and thus work, across threads. For instance, Merrill and Garland~\cite{Merrill:2016:MPS} give a perfectly balanced \emph{merge-path}~\cite{Green:2012:GMP} approach for \spmv. In our evaluation, we also use merge-path load balancing in our algorithmic benchmarking implementations. Since BOBA is much less precise globally than degree-based sorting methods, and reorders instead based on local edge structure, it doesn't tend to result in the types of cache optimizations that make load-balancing dramatically worse.

\section{The Algorithm and Analysis}
\label{sec:algorithm}
Whereas RCM and Sloan are heuristic solutions to the bandwidth problem (\secref{model:bandwidth}) and Gorder~\cite{Wei:2016:SGP} approximates a generalization of TSP (\secref{model:gscore}), BOBA is not explicitly based on an NP-hard optimization problem. Instead of directly approximating an objective, BOBA takes its inspiration from the \defn{preferential attachment}(PA) process defined by Albert and Barab{\'a}si~\cite{albert2002statistical}, which has been used extensively to model real-world sparse networks. Still, in Proposition~\ref{prop:regular}, we show that under pristine conditions, when the input is a $d$-regular COO graph, sorted by destination, that BOBA actually is a $d+1$ factor approximation algorithm, for a very similar TSP formulation as Gorder (\secref{model:gscore}). Therefore, BOBA has a theoretical performance guarantee in these types of COO graphs: it is not be worse than $\frac{1}{d+1}$ of the optimal solution to our proxy for cache coherency. We call our (slightly) simplified TSP optimization model \defn{neighbor score} (NScore) Model~\ref{model:nscore}.

\subsection{A Proxy for Cache Coherency}
We are interested in scoring permutations with some theoretical measure. For this purpose we use an objective that is similar to the GScore$(w)$ TSP objective Model~\ref{model:gscore} developed by \citet{Wei:2016:SGP}. For the clarity of our analysis, we restrict ourselves to the case where $w=1$ (a cache size of 2) but the intuition should generalize to larger cache sizes. Recall that the neighborhood of $v$ is $N(v) := \{u \in V(G) \mid vu \in E(G)\}$. The objective is to find an ordering of vertices $p := p_1,p_2,...p_n$, where
$\{p_1,p_2,\ldots,p_n \} = V(G)$, that maximizes

\begin{model}[NScore]\label{model:nscore}
    Given graph $G$, and ordering $p$ define
    \[\text{NScore$(G,p)$} := \sum^{n-1}_{i=1} |N(p_i) \cap N(p_{i+1})|.\]
\end{model}

For a given $G$, let $p^*$ be an optimal ordering with respect to NScore. That is $p^*$ maximizes NScore$(G,p^*)$. A quick observation yields a coarse upper bound.
\begin{lemma}\label{lem:upperbound}
    Let $G$ be a graph with $n$ vertices and $m$ edges. Then NScore$(G,p^*) \leq m$.
\end{lemma}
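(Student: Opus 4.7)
The plan is to use a straightforward term-by-term bound. For each $i$, observe that $|N(p_i) \cap N(p_{i+1})| \leq |N(p_i)| = \deg(p_i)$, simply because the intersection is a subset of $N(p_i)$. Summing these inequalities gives
\[
\text{NScore}(G,p^*) \;=\; \sum_{i=1}^{n-1} |N(p^*_i) \cap N(p^*_{i+1})| \;\leq\; \sum_{i=1}^{n-1} \deg(p^*_i) \;\leq\; \sum_{v \in V(G)} \deg(v) \;=\; m,
\]
where the last equality uses that in a directed graph the sum of out-degrees equals the number of (directed) edges. Since this holds for every ordering, it holds in particular for $p^*$, giving the claim.

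First I would state the per-term inequality and justify it in one sentence (subset containment). Then I would sum and note that $p^*$ is a permutation of $V(G)$, so the sum over the $n-1$ positions $p^*_1, \ldots, p^*_{n-1}$ is bounded above by the sum over all $n$ vertices, which is $m$. A secondary remark worth including is that a slightly tighter bound of the form $m - |\{w : \deg^{-}(w) \geq 1\}|$ can be obtained by a double-counting argument: exchange the order of summation and, for each vertex $w$, count the consecutive pairs $(p_i, p_{i+1})$ that both lie in $N^{-}(w)$; this is at most $\max(d^{-}(w) - 1, 0)$, yielding $\sum_w \max(d^{-}(w)-1, 0) \leq m$. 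But for the lemma as stated, only the simple subset bound is needed.

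There is no real obstacle here; the only subtlety is being careful about directedness, so that $\sum_v |N(v)| = m$ holds (summing out-neighborhood sizes recovers the number of directed edges). Since the preliminaries fix the push convention $N(v) = \{u : vu \in E(G)\}$ and define $m := |E(G)|$ with edges taken as ordered pairs, this identity is exactly what is needed and nothing more.
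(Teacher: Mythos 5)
Your proof is correct and follows essentially the same route as the paper's: bound each term $|N(p^*_i)\cap N(p^*_{i+1})|$ by the degree of one endpoint (the paper passes through $\min(|N(p^*_i)|,|N(p^*_{i+1})|)$ first, but then discards it in the same way), sum over the $n-1$ positions, and use $\sum_{v\in V(G)}\deg(v)=m$ under the paper's push-direction convention. Your handling of the directedness point and the optional tighter bound are fine additions but not needed.
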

\begin{proof}
    From the definition:
    \begin{align*}
        \text{NScore$(G,p^*)$} & = \sum^{n-1}_{i=1} |N(p^*_i) \cap N(p^*_{i+1})|,        \\
                               & \leq \sum^{n-1}_{i=1} \min(|N(p^*_i)|, |N(p^*_{i+1})|), \\
                               & \leq \sum_{v\in V(G)} \deg(v),                          \\
                               & = m.
    \end{align*}
\end{proof}

Before we formally state the algorithm, we analyze the natural choice of ordering vertices by their attachment time in the preferential attachment network model.
\subsection{Preferential Attachment}
Let $G^n_c$ be an undirected graph with $nc$ edges created by the LCD model of preferential attachment described by Bollob\'{a}s and Riordan~\cite{Bollobas:2004:TDO}. We do not restate the LCD model here. We simply point out that the graph is created by running $c$  $G^n_1$ processes, in which vertex $v_t$ added at time $t$ randomly selects an existing vertex $v_s$ in proportion to $v_s$'s degree at time $t-1$, and attaches to that vertex. That is, an edge $v_t \rightarrow v_s$ is attached with probability $\frac{\deg_{t-1}(v_s)}{2t-1}$, where $2t-1$ is the number of edges in $G_1^{t-1}$. We then form a single graph $G^n_c$ on $n$ vertices with $nc$ edges by running the $G^n_1$ process $c$ times, then identify vertices by arrival time.

We state a trivial corollary that immediately follows from the proof of Theorem 16 in Bollob\'{a}s and Riordan~\cite{Bollobas:2002:MRO}:

\begin{corollary}[Corollary of Theorem 16~\cite{Bollobas:2002:MRO}] \label{lem:bollobas}
    Let $G_c^n$ be a graph generated via the LCD model of preferential attachment, and let $p := p_1,p_2,\ldots,p_n$ be any ordering of $V(G)$.
    Then \[\mathbb{E}(\text{NScore}(G^n_c,p)) \approx \sum_{i = 1}^{n-1}\sum_{1 \leq a \leq \min(p_i,  p_{i+1})} \frac{c(c+1)}{a\sqrt{p_{i}p_{i+1}}}.\]
\end{corollary}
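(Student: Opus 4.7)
The plan is to derive the corollary by a linearity-of-expectation expansion that reduces \text{NScore} to a sum over candidate common neighbors, and then to invoke Theorem 16 of Bollob\'{a}s--Riordan as a black box to supply each per-vertex probability. Concretely, I would first write
\[
\mathbb{E}[\text{NScore}(G^n_c, p)] \;=\; \sum_{i=1}^{n-1} \mathbb{E}\bigl[|N(p_i) \cap N(p_{i+1})|\bigr] \;=\; \sum_{i=1}^{n-1} \sum_{a \in V(G)} \Pr\bigl[a \in N(p_i) \cap N(p_{i+1})\bigr],
\]
so everything reduces to estimating, for a fixed triple of distinct vertices $(p_i, p_{i+1}, a)$ (indexed by their arrival times), the probability that $a$ is simultaneously adjacent to $p_i$ and to $p_{i+1}$ in the LCD graph.

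Next I would partition the inner sum over $a$ according to the four cases determined by the temporal order of $a$ with respect to $p_i$ and $p_{i+1}$, since in the LCD model every edge is born at the arrival of its newer endpoint. The asymptotic claim of the corollary keeps only the range $1 \le a \le \min(p_i, p_{i+1})$, which corresponds to the case that \emph{both} $p_i$ and $p_{i+1}$ attached to the older vertex $a$ at their own arrival times. This is expected to dominate because in preferential attachment the degree of an early vertex $a$ at time $t$ grows like $c\sqrt{t/a}$, so old vertices accumulate most of the common-neighbor probability mass; the three other temporal cases each require at least one ``backward'' attachment from a later vertex, which carries an extra factor of $1/\sqrt{\,\cdot\,}$ that suppresses its contribution.

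For the dominant case, I would read off from Theorem 16 of~\cite{Bollobas:2002:MRO} the asymptotic probability that vertices indexed by arrival times $p_i, p_{i+1}$ share a specified earlier common neighbor $a$; in their normalization this is $\tfrac{c(c+1)}{a\sqrt{p_i p_{i+1}}}$, where the factor $c(c+1)$ rather than $c^2$ encodes the fact that the $c$ attachments of a single new LCD vertex are not independent. Summing this estimate over $1 \le a \le \min(p_i, p_{i+1})$ and then over $i$ yields exactly the claimed expression, and the whole argument is essentially a rearrangement of quantities Bollob\'{a}s and Riordan have already computed.

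The main obstacle is the justification that the other three temporal regimes ($p_{i+1} < a < p_i$, $p_i < a < p_{i+1}$, and $\max(p_i, p_{i+1}) < a$) contribute only lower-order corrections, which is what lets us cleanly write the single-range sum with the $\approx$ symbol. This needs the sharper degree-concentration bounds underlying Theorem 16 and a careful union bound over $a$; the rest of the derivation is genuinely trivial, which is why the statement is advertised as a corollary.
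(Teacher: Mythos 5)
Your overall skeleton---linearity of expectation over positions $i$ and candidate common neighbors $a$, then plugging in the joint-attachment estimate from Theorem~16---is what the paper intends by calling this a trivial corollary (the paper itself gives no more detail than the citation). The genuine gap is your middle step: the four-way temporal case analysis together with the claim that the three regimes other than $a \le \min(p_i,p_{i+1})$ are suppressed by an extra $1/\sqrt{\,\cdot\,}$ factor coming from a ``backward'' attachment. That suppression is false. In the LCD model the probability that two vertices with arrival times $x<y$ are adjacent is of order $1/\sqrt{xy}$ regardless of which endpoint is the newer one, so every temporal regime contributes at the same per-$a$ order $\Theta\bigl(\tfrac{1}{a\sqrt{p_i p_{i+1}}}\bigr)$: for instance, for $a>\max(p_i,p_{i+1})$ one needs two of $a$'s $c$ edges to land on $p_i$ and $p_{i+1}$, which happens with probability roughly $\tfrac{c(c-1)}{4a\sqrt{p_i p_{i+1}}}$. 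Summing the excluded ranges then produces terms of order $\log(n/\max(p_i,p_{i+1}))/\sqrt{p_i p_{i+1}}$ and $\log(\max/\min)/\sqrt{p_i p_{i+1}}$, which are comparable to---and for early pairs larger than---the retained $\log\min(p_i,p_{i+1})/\sqrt{p_i p_{i+1}}$ contribution. So if $N(\cdot)$ were the full undirected neighborhood, your dominance argument would not go through and the single-range formula would not follow.

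What actually licenses the restriction to $a \le \min(p_i,p_{i+1})$ is the paper's neighborhood convention, not asymptotics: the analysis uses the push-direction neighborhood $N(v) = \{u : vu \in E(G)\}$, and in the paper's description of the LCD process edges are oriented $v_t \rightarrow v_s$ from the newer vertex to the older vertex it attaches to. Hence $N(v)$ consists only of vertices older than $v$, so $N(p_i)\cap N(p_{i+1}) \subseteq \{1,\dots,\min(p_i,p_{i+1})\}$ holds deterministically and your three extra regimes contribute exactly zero---no dominance argument is needed (or available). With that correction, your derivation collapses to linearity of expectation plus the joint-attachment probability $\approx \tfrac{c(c+1)}{a\sqrt{p_i p_{i+1}}}$ read off from the proof of Theorem~16, which is the intended argument. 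A minor further point: the factor $c(c+1)$ rather than $c^2$ is best understood as the second-moment (P\'olya-urn) correction from the fluctuations of $\deg(a)$, which positively correlate the events that $p_i$ and $p_{i+1}$ both attach to $a$, rather than as dependence among a single new vertex's own $c$ attachments.
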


From this result, we see that $\mathbb{E}(\text{NScore}(G^n_c,p))$ is maximized by setting $p$ to the identity, $p = 1,2,3,\ldots,n$. In other words, we see for $G_c^n$ that we can not expect to do much better than ordering by attachment time.

\subsection{The Sequential Algorithm}

\begin{algorithm}\caption{\sc{Sequential BOBA}}\label{alg:seq}
    \begin{scriptsize}
        \begin{algorithmic}[H]
            \REQUIRE a graph $G$ such that each vertex is in at least one edge.
            \ENSURE a permutation $p$ of $V(G)$.
            \STATE{$(I,J) \gets $ COO$(G)$, $S \gets \varnothing$, $i \gets 0$, $p \gets \mathbf{0}^{n}$, $m \gets |E(G)|$}
            \FOR{$ j : 1 \ldots m$}
            \STATE{$v \gets I(j)$}
            \IF{$v \notin S$}
            \STATE{$p_i \gets v$; $i \gets i + 1$; $S \gets S \cup \{v\}$}
            \ENDIF
            \ENDFOR
            \IF{$i = n$}
            \STATE{return $p$}
            \ELSE
            \FOR{$j: 1 \ldots m $}
            \STATE{$v \gets J(j)$}
            \IF{$u \notin S$}
            \STATE{$p_i \gets u$; $i \gets i + 1$; $S \gets S \cup \{u\}$}
            \ENDIF
            \ENDFOR
            \ENDIF
            \STATE{return $p$}
        \end{algorithmic}
    \end{scriptsize}
\end{algorithm}
Algorithm~\ref{alg:seq} builds a permutation $p$ such that if $v = p_k$, then $k$ is the lowest index that $v$ appears in $I$ logically appended to $J$ (denoted $I++J$). It scans through $I++J$, maintaining a set, and counter of unique vertices seen so far, and updates $p$ until $n$ unique vertices have been encountered.

Figure~\ref{fig:road} illustrates why degree based methods tend to be inappropriate for road and similar types of naturally occurring networks. On the other hand, we can formally show that Algorithm~\ref{alg:seq} gives at least a bounded performance guarantee.
\subsection{Approximation Guarantee for $d$-regular Graphs}
Road-type networks in the real world tend to have low, fairly uniform degree distributions. Here we show that if a graph is $d$-regular, meaning each vertex has degree $d$, and the input COO graph is sorted, then Algorithm~\ref{alg:seq} gives an ordering within a factor of $\frac{1}{d+1}$ from optimal. Note that Gorder gives a $\frac{1}{2}$ guarantee, and so for low values of $d \in \{2,3,4,5\}$, BOBA's guarantee is a relatively small constant factor worse than Gorder's.

\begin{prop}\label{prop:regular}
    Let $G$ be a $d$-regular graph, $d > 1$, and $p_B$ be the ordering returned by BOBA(COO$_A(G)$), where COO$_A(G)$ lists the edges of $G$ sorted by destination. \\
    Then $(d+1)$NScore$(G,p_B) \geq$ NScore$(G,p^*)$.
\end{prop}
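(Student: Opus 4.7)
The plan is to characterize $p_B$ via the block structure imposed by the destination-sorted COO, derive a within-block lower bound on $\text{NScore}(G,p_B)$, pair it against a $d$-regular upper bound on $\text{NScore}(G,p^*)$, and reconcile the two by a charging argument. First I would note that when the destination column is sorted, the source column of the COO partitions into $n$ blocks, block $w$ consisting of the in-neighbors $N(w)$ in some order. A vertex $v$ can appear as a source only in blocks whose destination lies in $N(v)$, so its first appearance occurs in block $w_v := \min N(v)$. Thus $p_B$ is the concatenation, in increasing order of $w$, of the sets $S_w := \{v : \min N(v) = w\} \subseteq N(w)$, with empty $S_w$'s skipped. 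Any two vertices of the same $S_w$ share $w$ as a common neighbor and appear consecutively in $p_B$, so every $S_w$ contributes at least $(|S_w|-1)_+$ to the score, yielding
\[
\text{NScore}(G,p_B) \;\geq\; \sum_{w}(|S_w|-1)_+ \;=\; n - T,
\]
where $T := |\{w : S_w \neq \varnothing\}|$.

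For the upper bound, fix a vertex $w$: since $|N(w)| = d$, the positions of $N(w)$ in $p^*$ can produce at most $d-1$ consecutive pairs drawn entirely from $N(w)$, and summing over $w$ gives $\text{NScore}(G,p^*) = \sum_w C^*(w) \leq n(d-1)$; as a backup, Lemma~\ref{lem:upperbound} supplies the coarser $\text{NScore}(G,p^*) \leq m = nd$. To reach the constant $d+1$, I would use that $d$-regularity forces $|S_w| \leq d$ and hence $T \geq n/d$, and then set up a charging argument: to each common-neighbor triple $(u,v,w')$ counted by $\text{NScore}(G,p^*)$, assign a specific within-block BOBA pair lying in $S_{w'}$ or in the block immediately preceding or following it, and argue that each BOBA pair is charged at most $d+1$ times, using that a vertex in $N(w')$ has only $d$ other neighbors that could compete for that charge.

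The main obstacle is precisely this last step. The naive within-block lower bound $n-T$ is loose whenever $T$ is close to $n$, so the proof must also recover contributions I have so far discarded: the additional common neighbors shared by pairs in $S_w$ beyond the ``free'' witness $w$, and the between-block transitions in which the last vertex of one $S_w$ and the first vertex of the next non-empty block share neighbors by virtue of the sorted-COO structure. Making this bookkeeping watertight, while using only $d$-regularity together with the destination-sorted COO, is the combinatorial heart of the argument and the step on which the $d+1$ constant ultimately rests.
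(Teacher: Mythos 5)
Your setup is sound, and in places sharper than the paper's: the block decomposition of the destination-sorted COO, the identification of $p_B$ as the concatenation of the sets $S_w=\{v:\min N(v)=w\}$, the within-block bound $\mathrm{NScore}(G,p_B)\ge\sum_w(|S_w|-1)_+=n-T$, and the per-destination bound $\mathrm{NScore}(G,p^*)\le n(d-1)$ (read in-neighbors of $w$ there, but by regularity the count is right; this is tighter than Lemma~\ref{lem:upperbound}) are all correct. But the proof is not complete, and the missing piece is exactly the load-bearing one: the charging argument that is supposed to produce the factor $d+1$ is only announced, and you yourself flag it as the unresolved ``combinatorial heart.'' The part you do establish cannot be patched by minor bookkeeping, because $n-T$ can collapse to $O(d)$ while the optimum stays $\Theta(nd)$. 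Concretely, take the circulant $d$-regular graph with $N(v)=\{v+1,\dots,v+d\}$ (indices mod $n$, destinations sorted by label): every nonempty $S_w$ other than $S_1$ is a singleton, so $T=n-d+1$ and your lower bound is $d-1$, whereas $\mathrm{NScore}(G,p^*)\approx n(d-1)$. (In that example $p_B$ is in fact nearly optimal, but only thanks to the cross-block and extra-common-neighbor contributions you explicitly leave uncounted.) So, as written, the proposal establishes no constant-factor guarantee at all.

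The paper's proof takes a different, amortized route that never needs the quantity $T$. It peels the sorted edge list: the first destination's block of $d$ sources is placed consecutively by BOBA and contributes at least $d-1$ to the score; then the at most $d^2$ edges involving those $d$ sources are removed, so every remaining source still appears $d$ times and the argument recurses, $s(d,m)\ge d-1+s(d,m-d^2)$, hence $\mathrm{NScore}(G,p_B)\ge\frac{(d-1)m}{d^2}$, which is then compared against the coarse bound $\mathrm{NScore}(G,p^*)\le m$ of Lemma~\ref{lem:upperbound} (as written this yields the factor $\frac{d^2}{d-1}=d+1+\frac{1}{d-1}$, which the paper treats as the stated constant). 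The idea your sketch is missing is precisely this ``gain at least $d-1$ per at most $d^2$ edges retired'' accounting: the score earned is measured against edges consumed, not against the number of nonempty blocks. If you want to finish along your lines, replace the $n-T$ bound by such an amortization over edges, or actually prove the ``each BOBA pair is charged at most $d+1$ times'' claim---nothing in the current sketch supports it, and the singleton-block example shows the within-block pairs alone cannot absorb the charges.
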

\begin{proof}
    Let $s(d,m)$ be defined as the NScore of the permutation returned by BOBA on a COO graph sorted by destination where each vertex appears as a source $d$ times. The first $d$ edges will be reordered as $(1,x),(2,x),\ldots,(d,x)$ where $x$ is the first destination in the edge list. Notice that these edges contribute at least $d-1$ to NScore$(G,p_B)$. Next, remove all edges with destinations adjacent to sources in $\{1,2,\ldots,d\}$ from the edge list.
    Since $G$ is $d$-regular, this removes at most $d^2$ edges, and each remaining source vertex still appears in $d$ edges. Thus we have
    \begin{align*}
        s(d,m) & \geq d-1 + s(d,m-d^2),                      \\
               & \geq \frac{(d-1)m}{d^2},                    \\
               & \geq \frac{d-1}{d^2}\text{NScore$(G,p^*)$},
    \end{align*}
    where the last line is given by Lemma~\ref{lem:upperbound}.
\end{proof}

These few theoretical insights conclude our treatment of the sequential Algorithm~\ref{alg:seq}.
We now turn our attention to designing for large real-world data-sets for which the precise conditions of Proposition~\ref{prop:regular} and Corollary~\ref{lem:bollobas} are unrealistic, and which require as much speed as possible. We now present parallel Algorithm~\ref{alg:par}.

\begin{algorithm}\caption{\sc{BOBA}}\label{alg:par}
    \begin{scriptsize}
        \begin{algorithmic}[H]
            \REQUIRE a graph $G$ such that each vertex is in at least one edge.
            \ENSURE a permutation $p$ of $V(G)$.
            \STATE{$(I,J) = $ COO$(G)$;  $r \gets \mathbf{\infty}^{n}, p \gets (1,2,\ldots,n)$}
            \FORALL{$ i \in [2m]$ in parallel }
            \IF{$i < m$ and $i < r(I(i))$} \label{line:cond1}
            \STATE{$r(I(i)) \gets i$} \label{line:assign1}
            \ELSIF{$i < r(J(i))$} \label{line:cond2}
            \STATE{$r(J(i)) \gets i$} \label{line:assign2}
            \ENDIF
            \ENDFOR
            \STATE{// $r(i)$ uniquely corresponds to the position of $i$ in $p$.}
            \STATE{return \textsc{ParMapKeys}$(p,r)$} \label{line:mapbykey}
        \end{algorithmic}
    \end{scriptsize}
\end{algorithm}

\noindent
Algorithm\ref{alg:par} is similar in spirit to Algorithm\ref{alg:seq}, it finds a set of $n$ indexes $r$, in $[2m]$, corresponding to $n$ unique vertices in $I++J$, however, the index picked corresponding to vertex $r(k) = v$ is no longer guaranteed to be the lowest index where $v$ appears in $I++J$. We could recover this sequential property by using an \texttt{AtomicMin} operation at lines~\ref{line:assign1} and~\ref{line:assign2} (at the cost of reordering performance), we found in practice that the resulting permutation did not yield reorderings that delivered significantly better performance. Since $r \in [2m]^n$, and each dimension is a unique key associated with each element of $p$, line~\ref{line:mapbykey} can be accomplished in $O(n)$ time (e.g., hash tables~\cite{Awad:2021:BGH}).

\section{Results}
\label{sec:results}

\subsection{Experimental Setup}

\paragraph{Hardware} We evaluate our implementations on an NVIDIA DGX Station with V100 GPUs with 32~GB DRAM and an Intel Xeon CPU E5-2698 v4.  The GPU has a 6~MiB L2 cache, 80 streaming multiprocessors (SM), and a per-SM 128~KiB L1 cache. We perform all CPU-based reorderings on an AMD EPYC 7402 24-Core Processor CPU\@. The GPU has a theoretical achievable DRAM bandwidth of 897~GiB/s. Our code is complied with CUDA~11.6.

\paragraph{Datasets and reordering} Table~\ref{tab:datasets} summarizes the datasets we use in our benchmarks. We only used the undirected datasets for implementations that require an undirected graph (e.g., triangle counting). We randomized the datasets listed, generated, and report times for, hub sort and degree order using the reordering tool given in the companion repository\footnote{\url{https://github.com/CMUAbstract/Graph-Reordering-IISWC18}} for Balaji et al.~\cite{balaji2018graph}. For Gorder~\cite{Wei:2016:SGP}, we used the authors' repository.\footnote{\url{https://github.com/datourat/Gorder}} For RCM we used MATLAB~\cite{MATLAB:R2021a}.

\paragraph{Applications}
We evaluate BOBA and other reordering techniques on four different graph algorithms (sparse matrix-vector multiply, PageRank, triangle counting, and single-source shortest path), each featuring a different type of graph traversal. We evaluate performance on the GPU graph framework of \citet{Osama:2022:EOP} using \emph{merge-path} load balancing~\cite{Green:2012:GMP} for all graph algorithms and datasets.

SpMV performs a neighbor-reduce operation to compute the dot product between each row and the input vector, then stores the result in the output vector. Given the sparsity of each row, a good reordering scheme will improve the access to the input vector, achieving coalesced accesses while reading the vector.

In PR, each edge's source propagates its weight to its neighbor vertices using an atomic operation. As with both TC and SSSP, the atomic operation will benefit from an efficient reordering algorithm. PR differs from the other algorithms since it operates on the entire graph multiple times until convergence.

In TC, for each edge in the graph, we perform a set-intersection operation between the adjacency lists of the edge source and destination vertices. The edge source adjacency list will be already in the cache. However, the destination vertex may or may not be in the cache. An efficient reordering algorithm will reorder the vertices such that adjacent threads (or a group of threads) process neighbor vertices, which facilitates a larger hit rate in the L1 or L2 cache when loading the adjacent list of the destination vertex. During the set-intersection operation, we atomically increment the triangle count of the common vertices. If the neighbor vertices have adjacent labels, the atomic operations will be stored in the same cache line, facilitating efficient (almost) coalesced atomic operations.

SSSP features sparse frontiers of vertices, atomic updates to destination vertices' distances, and traversal of neighbor vertices. Good reorderings improve cache locality for neighboring adjacency lists and neighboring distance values.

\paragraph{Summary of results}  BOBA improves cache locality on both CPUs and GPUs, achieving application speedups similar to heavyweight techniques. On \spmv it achieves speedups over random ranging from 1.17--6.25$\times$ with median 3.5$\times$ in skew networks, and 2.25--5.5$\times$ with median 3.4$\times$ in road-like networks. For end-to-end \spmv, BOBA achieves speedups from 1.36--3.22$\times$ with median of 2.35$\times$.

Compared to randomly labeled graphs, BOBA-reordered graphs achieve end-to-end speedups (including reordering time) up to 3.45$\times$. The reordering time is approximately one order of magnitude faster than existing lightweight techniques, and up to 2.5 orders of magnitude faster than heavyweight techniques. Because of contention, BOBA doesn't improve TC, achieving $\sim$0.6 slower than random; however, BOBA achieves L1 hit rates between 40--95\% for TC\@. For all applications, BOBA also achieves hit rates on par with heavyweight methods as measured during runtime (e.g., for \spmv, 7--52\% and 11--67\% in the L1 and L2 caches, respectively), and by our own spatial locality metrics (0.48 to 0.88).

\subsection{Our NBR Spatial Locality Metric}

As a metric for measuring spatial locality, we define

$\textit{NBR}(G)$ as the expected ratio of cache lines to neighbors of a randomly selected vertex.

\[ \textit{NBR}(G) := \frac{1}{n}\sum_{v \in V(G)}\frac{\text{lines spanned by }N(v)}{|N(v)|}.\]

\noindent
Lower $\textit{NBR}(G)$ implies better spatial locality. Table~\ref{tab:NBR32} shows that Gorder (a heavyweight method) consistently has the best NBR but BOBA is slightly better than RCM and much better than Hub.

\begin{table}
    \setlength{\belowcaptionskip}{0in}
    \setlength{\abovecaptionskip}{0in}

    \captionsetup{font=footnotesize,labelfont=footnotesize}
    \resizebox{\columnwidth}{!}{\begin{tabular}{lccccccccccc}
            \toprule
            \multirow{2}{*}{Datasets} & \multicolumn{1}{c}{Rand} & \multicolumn{1}{c}{Gorder} & \multicolumn{1}{c}{RCM } & \multicolumn{1}{c}{BOBA} & \multicolumn{1}{c}{Hub} \\                               & \multicolumn{1}{l}{NBR} & & & & \\ \midrule
            delaunay\_n23             & 0.99                     & \textbf{0.38}              & 0.56                     & 0.48                     & 0.99                    \\
            delaunay\_n24             & 0.99                     & \textbf{0.38}              & 0.56                     & 0.48                     & 0.99                    \\
            hollywood-2009            & 0.99                     & \textbf{0.49}              & 0.59                     & 0.57                     & 0.94                    \\
            great-britain\_osm        & 1.0                      & \textbf{0.60}              & 0.93                     & 0.65                     & 0.99                    \\
            road\_usa                 & 1.0                      & \textbf{0.65}              & 0.85                     & 0.79                     & 0.99                    \\
            arabic-2005               & 0.84                     & \textbf{0.25}              & 0.40                     & 0.43                     & 0.84                    \\
            soc-LiveJournal           & 0.88                     & \textbf{0.64}              & 0.79                     & 0.77                     & 0.88                    \\
            ljournal-2008             & 0.89                     & \textbf{0.65}              & 0.79                     & 0.76                     & 0.88                    \\
            kron\_g500-logn20         & 0.75                     & \textbf{0.70}              & 0.72                     & 0.74                     & 0.72                    \\
            kron\_g500-logn21         & 0.72                     & \textbf{0.69}              & 0.71                     & 0.72                     & 0.70                    \\
            soc-orkut                 & 0.99                     & \textbf{0.75}              & 0.89                     & 0.88                     & 0.96                    \\
            \bottomrule
        \end{tabular}}
    \caption{NBR Line metric over CSR for our graph datasets for the different reordering schemes. Lower metrics indicate better spatial locality.}
    \label{tab:NBR32}
\end{table}

\subsection{End-to-end Time}

We now evaluate Problem~\ref{problem:pragmatic-graph-reordering}: can we achieve an end-to-end speedup on a graph algorithm including reordering time? End-to-end time includes the time to reorder the COO, convert the COO input graph to CSR, and to run the graph algorithm. We assume that input labels are already randomized. Additionally, we add the cost of sorting the COO input graph for the TC algorithm; in our implementation, converting a sorted COO graph results in a CSR graph with a sorted per-vertex adjacency list (which is required for the set-intersection operation in TC). We implement both conversion and sorting on the CPU\@.

Figure~\ref{fig:e2e} compares BOBA's performance to randomly labeled datasets. Except for TC, the cost of converting COO to CSR dominates overall runtime. BOBA significantly improves this conversion time, achieving speedups between \{1.3, 4.8\} for SpMV, \{1.3, 5.1\} for PR, and \{1.3, 3.3\} for SSSP, respectively. Improvements in the CPU-based COO to CSR conversion are the result of BOBA improving cache locality on the CPU\@.

Interestingly, sorting the input COO graph in TC almost eliminates the conversion cost for both BOBA and random; however, sorting is very expensive. For instance, sorting delaunay\_24 is 10.5$\times$ and 13$\times$ slower than converting BOBA-reordered and random-labeled graphs. In general, BOBA slightly improves the cost of sorting the COO graphs. BOBA achieves sorting speedups between 1.045 (hollywood-2009) to 1.54 (great-britain\_osm). For the end-to-end TC cost, BOBA is as high as 1.4$\times$ faster than random (great-britain\_osm); however, BOBA end-to-end time is 0.62$\times$ and 0.6$\times$ slower than random for the two kron graphs. We believe that the slowdown is due to increased contention. However, as we shall see in our cache analysis, BOBA nonetheless significantly improves the cache hit rate for TC\@.

\subsection{Reordering and Graph Algorithms Runtime}

Now we compare the time required for the reordering and graph algorithms. We compare BOBA reordering with two heavyweight reordering techniques (RCM, Gorder) and two lightweight ones (degree, hub-sort). We normalize all graph algorithm runtimes to random. Figures~\ref{fig:scale_free_pareto} and~\ref{fig:uniform_pareto} summarize our results for this benchmark.

\paragraph{Reordering time} Compared to heavyweight CPU-based graph reordering algorithms, our GPU-friendly BOBA algorithm achieves reordering times that are orders of magnitude faster. For scale-free datasets, except for the arabic-2005 dataset, reordering all datasets requires less than 100~ms, while other lightweight reordering algorithms are 100--2000~ms. Heavyweight algorithms always take more than 2000~ms and can be three orders of magnitude slower than BOBA (e.g., for kron\_g500-log21). For the arabic-2005 dataset, BOBA reordering (224 milliseconds) is also an order of magnitude faster than other lightweight algorithms ($\sim$3500 milliseconds) and more than 2.5 orders of magnitude faster than the heavyweight algorithms (77,000 and 93,500 milliseconds for Gorder and RCM, respectively). BOBA also achieves significant speedups for road-like graphs; however, compared to scale-free graphs, the cost of heavyweight reordering algorithms drop to less than 25,000 milliseconds.

\paragraph{Graph algorithm runtimes} After reordering, for the arabic-2005, hollywood-2009, and kron\_g500-log20 graphs, BOBA's SpMV performance is faster than all other heavyweight and lightweight techniques.  For instance, on arabic-2005 BOBA is slightly faster than GOrder, and RCM, but all methods give dramatic improvements over random. This correlates to our NBR metric Table~\ref{tab:NBR32}.

For the low clustering coefficient dataset kron\_g500-log21 dataset, NBR were not improved much from random, and improvements across all techniques are more muted. For all other scale-free graphs, BOBA-reordered graphs achieve a performance faster than degree-based techniques and slower than heavyweight ones.

For road-like graphs, as expected, degree-based reordering achieves performance that is either close to random (in SpMV, TC, and PR) or worse for a more expensive algorithm (SSSP)\@. In general, BOBA-ordered graphs achieve similar performance to the heavyweight techniques. Here we point to Proposition~\ref{prop:regular}, which is again consistent with Table~\ref{tab:NBR32}, with BOBA in a distinct second place on delauny, road\_usa, great-britain\_osm. All reordering techniques struggle with SSSP\@.

\begin{table}
    \setlength{\belowcaptionskip}{0in}
    \setlength{\abovecaptionskip}{0in}

    \captionsetup{font=footnotesize,labelfont=footnotesize}
    \resizebox{.5\columnwidth}{!}{\begin{tabular}{lccccc}
            \toprule
            \multirow{2}{*}{Datasets} & \multirow{2}{*}{$|V|$} & \multirow{2}{*}{$|E|$} & \multicolumn{2}{c}{Size in MB}                               \\
                                      &                        &                        & \multicolumn{1}{l}{offsets}    & \multicolumn{1}{l}{indices} \\ \midrule
            delaunay\_n22             & 4.2M                   & 25M                    & 16                             & 96                          \\
            delaunay\_n23             & 8.4M                   & 50M                    & 32                             & 192                         \\
            delaunay\_n24             & 16.8M                  & 100.7M                 & 64                             & 384                         \\
            great-britain\_osm        & 7.7M                   & 16.3M                  & 29.5                           & 62.2                        \\
            hollywood-2009            & 1.1M                   & 113.9M                 & 4.3                            & 434.5                       \\
            rgg\_n\_2\_22\_s0         & 4.2M                   & 60.7M                  & 16                             & 231.6                       \\
            rgg\_n\_2\_23\_s0         & 8.4M                   & 127M                   & 32                             & 484.5                       \\
            rgg\_n\_2\_24\_s0         & 16.8M                  & 265M                   & 64                             & 1,011                       \\
            road\_usa                 & 23.9M                  & 57.7M                  & 91.4                           & 220                         \\
            arabic-2005               & 22.7M                  & 639.9M                 & 86.6                           & 2,441                       \\
            kron\_g500-logn20         & 1M                     & 89M                    & 4                              & 340.4                       \\
            kron\_g500-logn21         & 2.1M                   & 182M                   & 8                              & 694.6                       \\
            soc-orkut                 & 3M                     & 212.7M                 & 11.4                           & 811.4                       \\
            soc-LiveJournal1          & 4.8M                   & 69M                    & 18.492                         & 263.2                       \\
            ljournal-2008             & 5.3M                   & 79M                    & 20.46                          & 301.4                       \\
            \bottomrule
        \end{tabular}}
    \caption{Datasets from this work are from the SNAP Network Analysis Project~\cite{leskovec:2016:snap} and the SuiteSparse Matrix Collection~\cite{Davis:2011:Florida}.}
    \label{tab:datasets}
\end{table}

\begin{figure}
    \centering
    \captionsetup{font=footnotesize,labelfont=footnotesize}
    \includegraphics[width=\columnwidth]{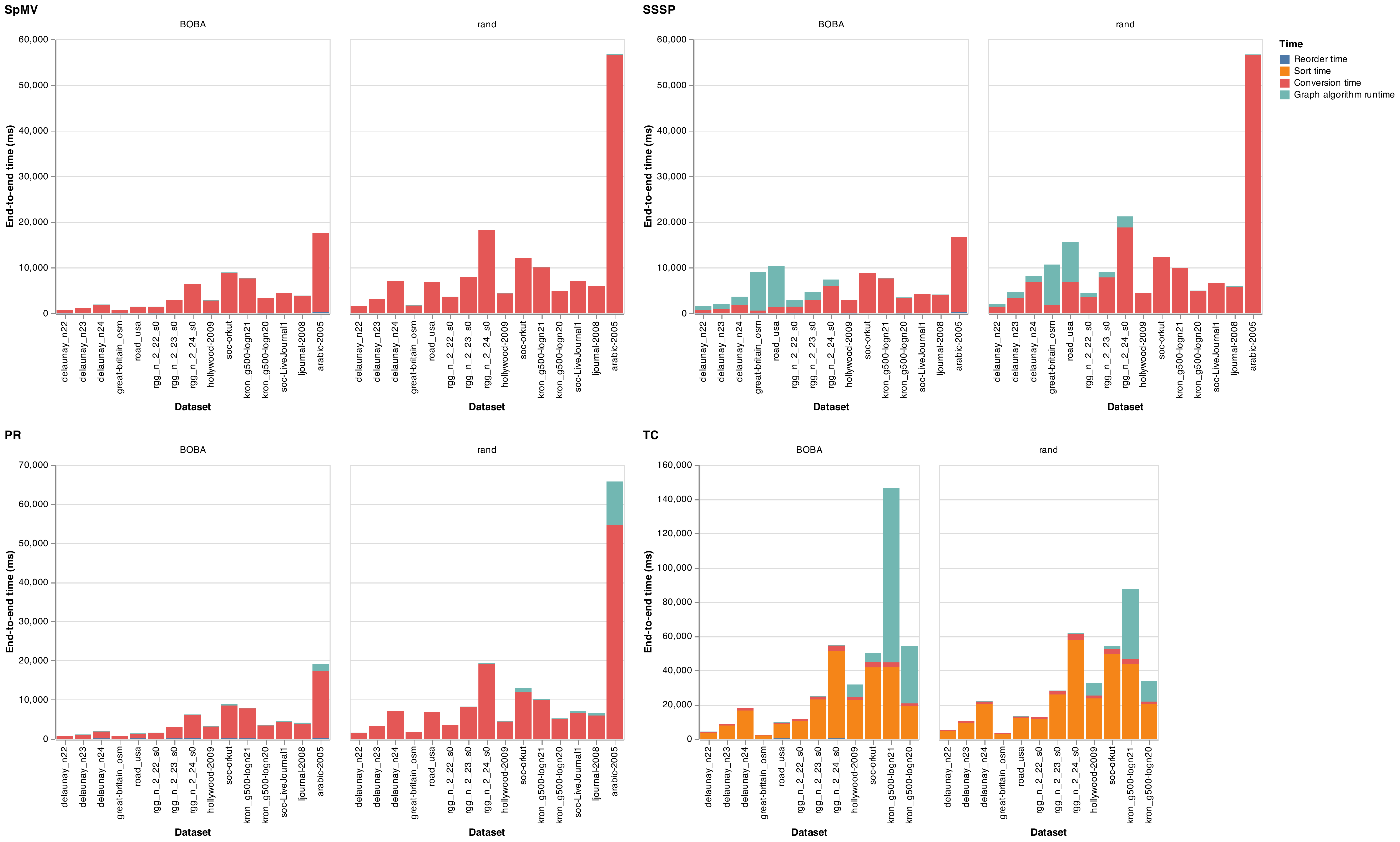}
    \caption{End-to-end results. Note that we sort the graph adjacency lists in only the TC algorithm.}
    \label{fig:e2e}
\end{figure}


\subsection{Cache Hit Rate Analysis}

We hypothesized that the performance improvement from reordering comes from better cache performance. Thus we profile our graph applications with BOBA and other reordering techniques to evaluate cache performance. We measure cache hit rates at the different cache hierarchy levels (L1 and L2) and the percentage of memory transactions fulfilled by the GPU's DRAM\@. We only measure the hit rates for the read operations and do not consider writes since we are only interested in memory-read operations resulting from traversing the graph. In Figure~\ref{fig:cache_hitrates}, we see that BOBA achieves similar cache hit rates to other heavyweight techniques (i.e., Gorder and RCM) for the TC, SpMV, and  PR algorithms. Other lightweight reordering techniques achieve cache hit rates closer to random than heavyweight reorderings.

TC has high data reuse; hence, it enjoys a very high hit rate (specifically, at the L1 cache level). SSSP shows the least improvement from reordering. In general, BOBA achieves the cache performance of heavyweight methods at the reordering cost of a lightweight method.

\begin{figure}
    \centering
    \captionsetup{font=footnotesize,labelfont=footnotesize}
    \includegraphics[width=\columnwidth]{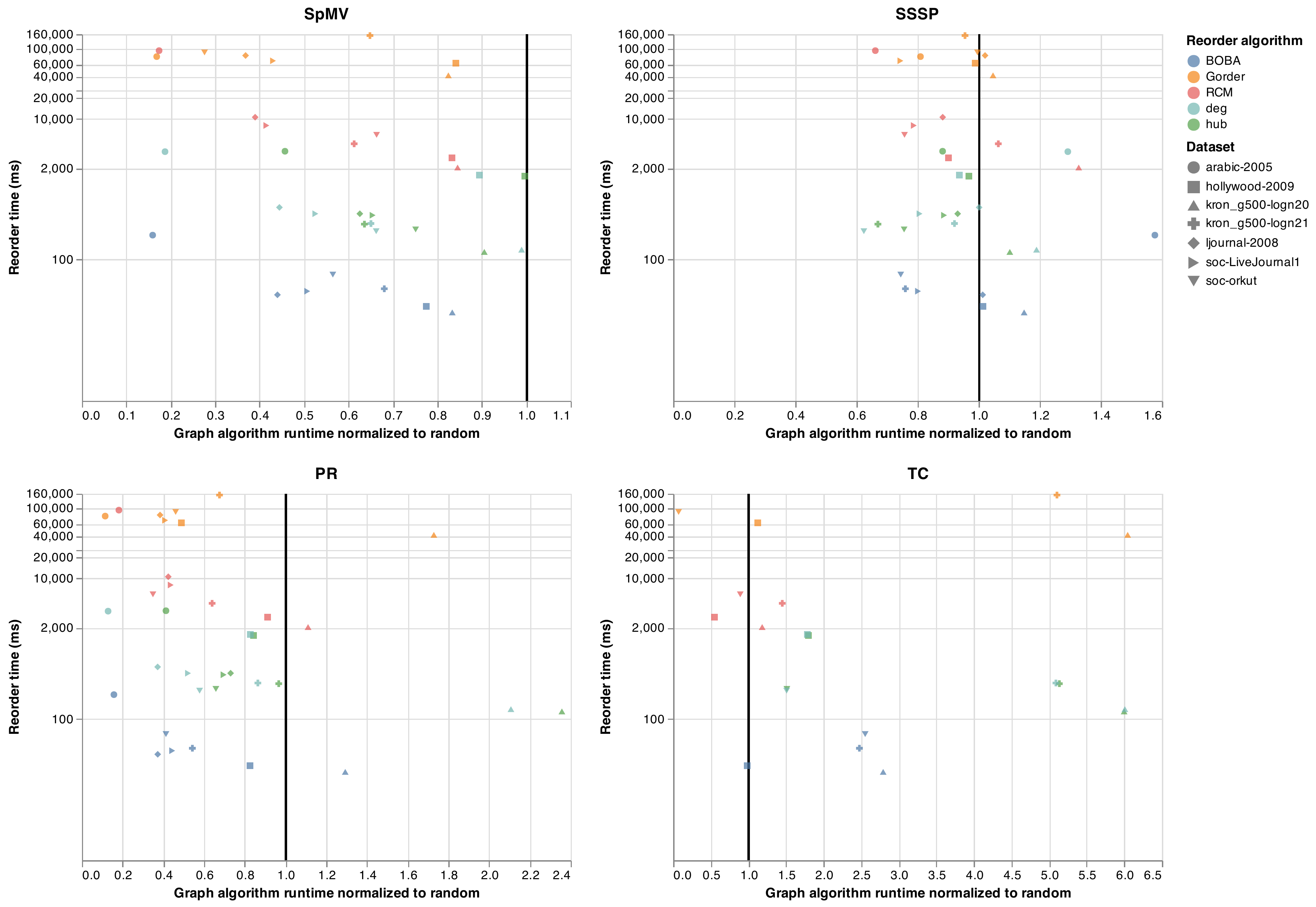}
    \caption{Runtime vs.\ reorder time for scale-free graphs.}
    \label{fig:scale_free_pareto}
\end{figure}

\begin{figure}
    \centering
    \captionsetup{font=footnotesize,labelfont=footnotesize}
    \includegraphics[width=\columnwidth]{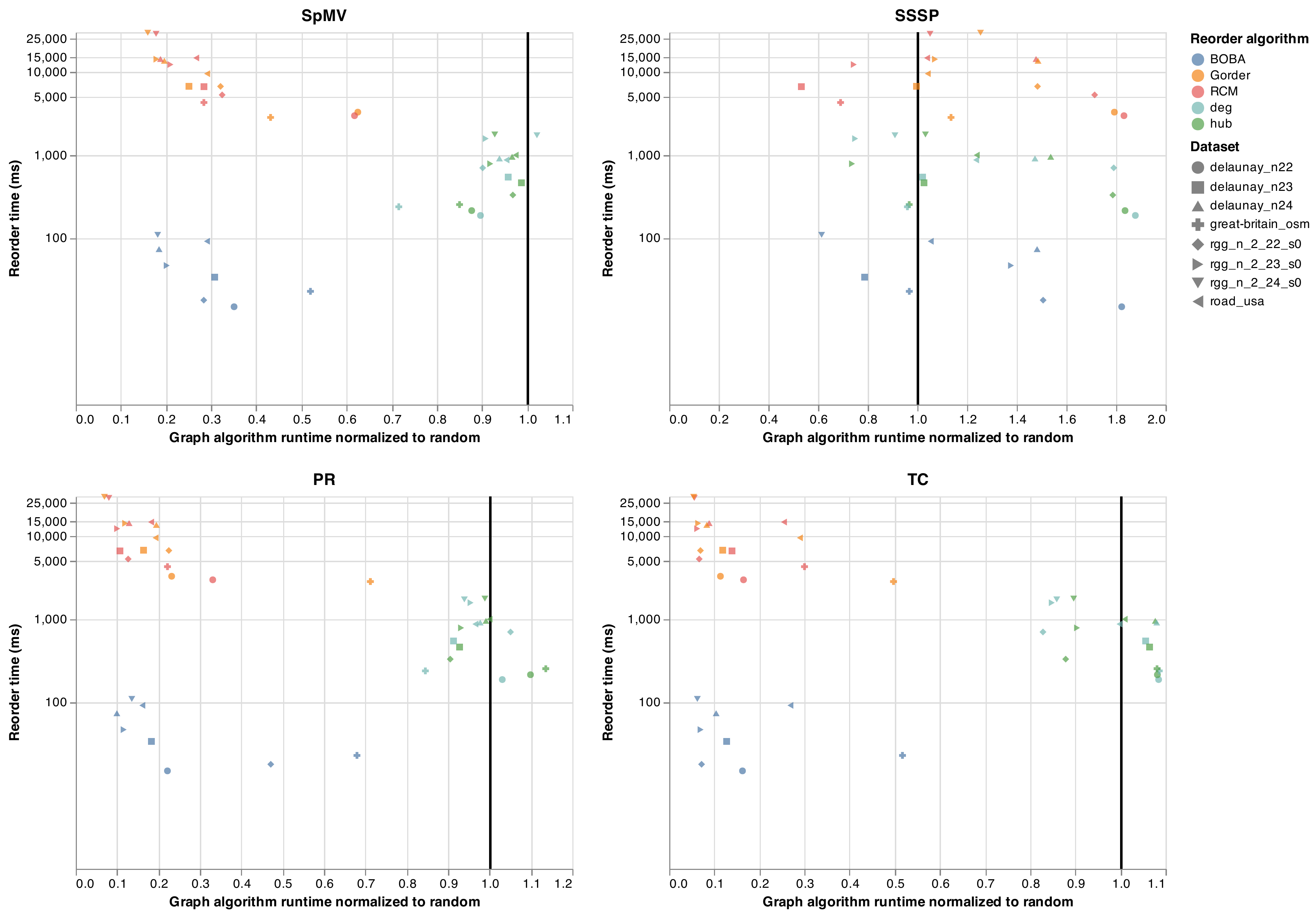}
    \caption{Runtime vs.\ reorder time for uniform graphs. Degree-based techniques perform worse than random on road networks where degree is anticorrelated to connectivity (Also see Figure~\ref{fig:road}). BOBA-ordered graphs achieve similar performance to heaveweight reorderings.}
    \label{fig:uniform_pareto}
\end{figure}

\begin{figure}
    \centering

    \captionsetup{font=footnotesize,labelfont=footnotesize}
    \begin{subfigure}[t]{\columnwidth}
        \includegraphics[width=\columnwidth]{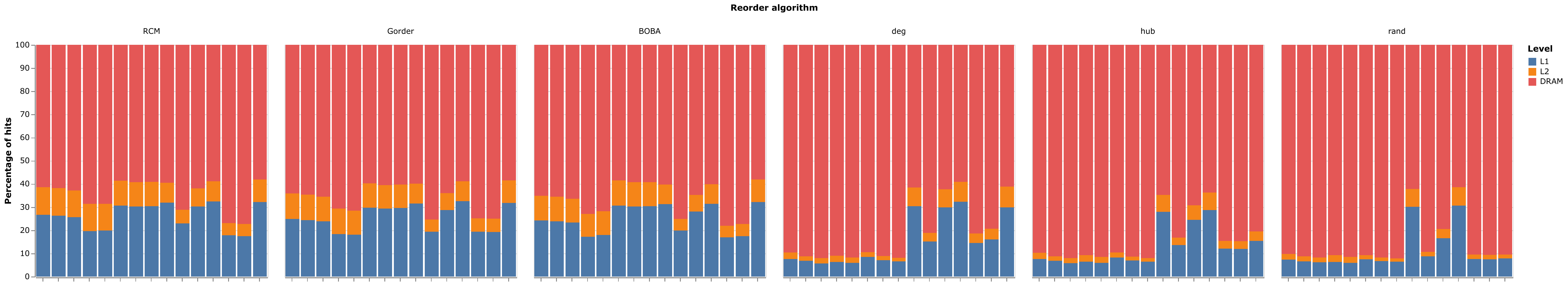}
        \includegraphics[width=\columnwidth]{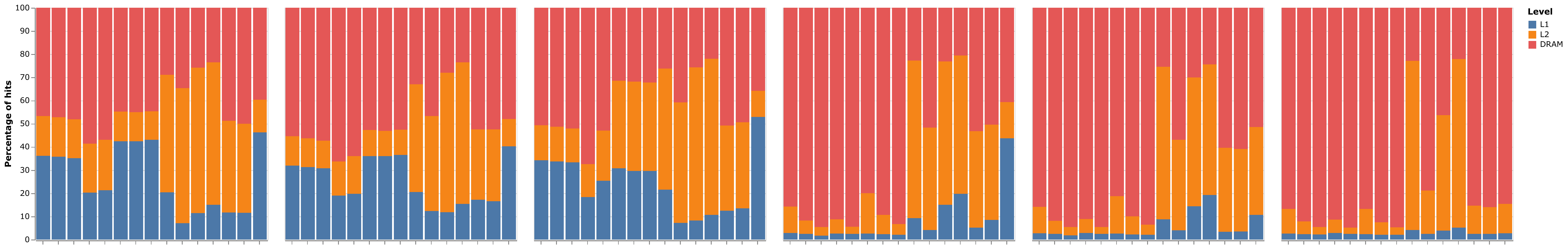}
        \includegraphics[width=\columnwidth]{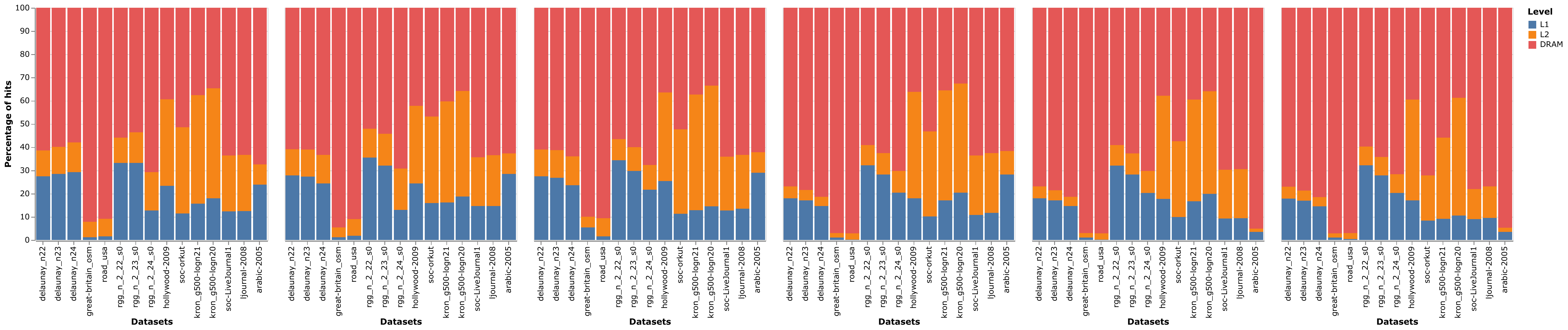}
        \caption{PR (top), SpMV (middle), and SSSP (bottom)}
    \end{subfigure}

    \begin{subfigure}[t]{\columnwidth}
        \includegraphics[width=\columnwidth]{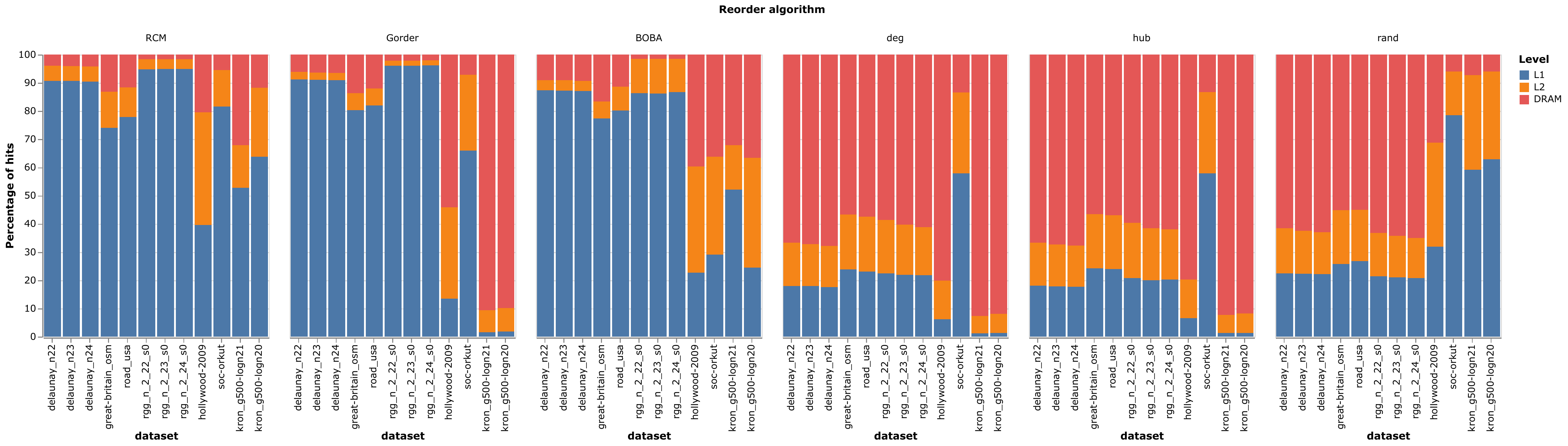}
        \caption{TC}
    \end{subfigure}

    \caption{Analysis of cache hit rates for different algorithms. BOBA achieves roughly the same cache hit rates as the best other methods with a much smaller reordering cost.}
    \label{fig:cache_hitrates}
\end{figure}

\subsection{Randomized Edge Orders}
\label{subsec:randomEdges}
Of course BOBA is sensitive to the order of the input edges in the input COO matrix. It is common to assume that the input edge list is sorted by either destination or source~\cite{Filippone:2017:SMM}. The data sets and repositories we investigated were no exception to this, but obviously the edges, especially of road-type networks without strong degree distributions, can be adversarially ordered such that BOBA cannot help. If an edge list does appear in random order, we suggest either sorting or binning the COO by destination, if the cost is acceptable, before running BOBA\@.

Table~\ref{tab:random} lists the  COO to CSR conversion and \spmv run times after applying BOBA to several datasets that were randomized before. BOBA provides no advantage on a more uniform network (delaunay\_n22) but shows modest performance gains as the networks become more scale-free.

\begin{table}
    \setlength{\belowcaptionskip}{0in}
    \setlength{\abovecaptionskip}{0in}

    \captionsetup{font=footnotesize,labelfont=footnotesize}
    \begin{small}
        \begin{tabular}{lccccc}
            \toprule
            \multirow{2}{*}{Randomized datasets} & \multicolumn{2}{c}{Rand} & \multicolumn{2}{c}{BOBA}                \\                               & \multicolumn{1}{l}{\spmv}&\multicolumn{1}{l}{$COO\rightarrow CSR$}\\
            \midrule
            arabic-2005                          & 97                       & 118207                   & 42  & 109351 \\
            soc-LiveJournal                      & 12.6                     & 8920                     & 7.5 & 8268   \\
            delaunay\_n22                        & 5.4                      & 3071                     & 5.5 & 3105   \\
            coPapersCiteseer                     & 3.4                      & 3239                     & 3.4 & 2891   \\ \bottomrule
        \end{tabular}
    \end{small}
    \caption{We first randomized four datasets and then ran BOBA on the randomized datasets. The runtimes above (in ms) compare \spmv and COO$\rightarrow$CSR conversion between the randomized dataset and the BOBA-reordered randomized dataset.}
    \label{tab:random}
\end{table}

\section{Conclusions and Future Work}
\label{sec:conclusion}

We introduced BOBA, a lightweight parallel-friendly fast reordering algorithm that achieves heavyweight-like improvements in cache locality and memory access patterns. We believe that BOBA can be easily extended beyond our single GPU implementation. BOBA will scale well with more GPUs, and the increased cache locality delivered by a BOBA reordering will hopefully translate to a multi-GPU setting. With emerging unified memory support  (i.e., CUDA's unified memory) across multiple GPUs, we believe that BOBA can reduce inter-GPU communication volume on multi-GPU graph primitives.

Moreover, in the future, we would like to address questions such as: Are there real-time or streaming applications for BOBA?  Can BOBA, or a BOBA-like method, give similar speed-ups to other pragmatic workflows over lists of structures, such as tensors, that can be modeled as hypergraphs? We leave these as open questions.

\bibliography{references, ../bib/all}

\end{document}